\newtheorem{theorem}{Theorem}
\numberwithin{theorem}{section}
\newtheorem{lemma}[theorem]{Lemma}
\newtheorem{Def}[theorem]{Definition}
\newtheorem{prop}[theorem]{Proposition}
\newtheorem{remark}[theorem]{Remark}
\numberwithin{equation}{section}
\newcommand{\up}{\uparrow}
\newcommand{\dn}{\downarrow}
\newcommand{\s}{\mathfrak{S}}
\newcommand{\G}{\mathcal{G}}
\newcommand{\I}{\mathrm{i}}
\newcommand{\mean}{\mathbf{E}}
\newcommand*\Bell{\ensuremath{\boldsymbol\ell}}
\begin{document}

\title[ETH for Translation Invariant Spin Systems]{%
	Eigenstate Thermalisation Hypothesis for Translation Invariant Spin Systems}


\author*[1]{\fnm{Shoki} \sur{Sugimoto}}\email{sugimoto@cat.phys.s.u-tokyo.ac.jp}
\author[2]{\fnm{Joscha} \sur{Henheik}}\email{joscha.henheik@ist.ac.at}
\author[2]{\fnm{Volodymyr} \sur{Riabov}}\email{volodymyr.riabov@ist.ac.at}
\author[2]{\fnm{László} \sur{Erd\H{o}s}}\email{lerdos@ist.ac.at}


\affil[1]{\orgdiv{Department of Physics}, \orgname{the University of Tokyo}, \orgaddress{\street{7-3-1 Hongo}, \city{Bunkyo-ku}, \state{Tokyo}, \postcode{113-0033}, \country{Japan}}}
\affil[2]{\orgname{IST Austria}, \orgaddress{\street{Am Campus 1}, \city{3400 Klosterneuburg}, \country{Austria}}}

\abstract{
	    We prove the Eigenstate Thermalisation Hypothesis (ETH) for local observables in a typical translation invariant system of quantum spins with mean field interaction.
    This mathematically verifies the observation made in Ref.~\cite{santos2010localization} that ETH may hold for systems with additional translation symmetries for a naturally restricted class of observables.
    We also present numerical support for the same phenomenon
    for Hamiltonians with  local interaction.
}
\keywords{Eigenstate thermalisation, Microcanonical ensemble, Translational invariance, Quantum spin systems}
\pacs[AMS Subject Classification]{60B20, 82B44, 82D30}

\maketitle
\vspace{-10mm}
\emph{Date: \date{\today}}


\section{Introduction}
	Recent experiments have demonstrated thermalisation of isolated quantum systems under unitary time evolution~\cite{trotzky2012probing, langen2013local, clos2016time, kaufman2016quantum, neill2016ergodic, tang2018thermalization}.
	 In this context, thermalisation means that, after a long time evolution, observables attain their equilibrium (thermal) values determined by statistical mechanics.
	The primary mechanism behind this thermalisation of isolated quantum systems is an even stronger concept,
	the \textit{Eigenstate Thermalisation Hypothesis~(ETH)}~\cite{von2010proof, deutsch1991quantum, srednicki1994chaos}.
	Informally, the ETH asserts that (i) \textit{physical} observables $A$ take their thermal value on every eigenstate of a many-body quantum system
	and (ii) off-diagonal elements of $A$ in the energy eigenbasis are vanishingly small.
	In particular,
	the ETH ensures the thermalisation of $A$ for any initial state with a macroscopically definite energy,
	given no massive degeneracy in the energy spectrum~\cite{d2016quantum, mori2018thermalization, deutsch2018eigenstate}.
	The ETH has numerically been verified for  individual models with several local or few-body
	observables~\cite{rigol2008thermalization, rigol2009breakdown, rigol2009quantum, biroli2010effect, santos2010localization, steinigeweg2013eigenstate, beugeling2014finite}.
	On the other hand, recent studies have revealed several classes of systems for which the ETH breaks down: examples include systems with an extensive number of local conserved quantities~\cite{rigol2007relaxation, cassidy2011generalized, hamazaki2016generalized}, many-body localisation~\cite{basko2006metal, nandkishore2015many}, and quantum many-body scars~\cite{turner2018weak, bull2019systematic}.

	As another approach to this question,
	it has been proven that the ETH holds true for \textit{any} deterministic observable for \textit{almost all} Hamiltonians $H$~\cite{von2010proof, reimann2015generalization, cipolloni2021eigenstate} sampled from a Wigner matrix ensemble
	which has no further unitary symmetry  (see also~\cite{cipolloni2023gaussian, adhikari2023eigenstate}
	for ETH for more general mean field ensembles). \normalcolor
	If the Hamiltonian has some unitary symmetry, the ETH clearly breaks down for conserved quantities related to those symmetries because we can find simultaneous eigenstates of the Hamiltonian and conserved quantities.
	However, Ref.~\cite{santos2010localization} observed an interesting phenomenon, namely that local quantities still satisfy the ETH even in a system with translational symmetry.
	Therefore, the question of how \textit{generically} and for what class of observables the ETH holds true in realistic situations has yet to be fully resolved.

	In this paper we mathematically rigorously prove an instance of the observation from~\cite{santos2010localization}.
	More precisely we show that, for the mean-field case of an ensemble with translational symmetry, the ETH typically holds for quantities whose support does not exceed half of the system size with the optimal speed of convergence.
	The ETH also typically holds for quantities whose support exceeds half the system size but with a slower convergence speed, while it typically breaks down for some observables whose support extends to the entire system.
	We complement our analytical results for the mean-field case with a numerical simulation for an ensemble of more realistic Hamiltonians with local interactions.

\section{Setup}
	We consider a one-dimensional periodic quantum spin system on the $L \in \mathbb{N}$ sites of the standard discrete torus
	$$\mathbb{T}_{L} \coloneqq \faktor{\mathbb{Z}}{L\mathbb{Z}}\,. $$
	On each vertex $j \in \mathbb{T}_L$, the one particle Hilbert space $\mathcal{H}_j$ is given by $\mathbb{C}^2$ and we denote its canonical basis by $\{ \ket{\uparrow}, \ket{\downarrow}\}$. The corresponding $L$-particle Hilbert space
	$$\mathcal{H} \coloneqq \bigotimes_{j=1}^{L} \mathbb{C}^{2}$$
	is simply given by the tensor product with dimension $2^{L}$.
	For simplicity, we restrict ourselves to the spin-$1/2$ case, but our results can straightforwardly be extended to general spin $s$ with one particle Hilbert space being $\mathbb{C}^{2s+1}$.

	Next, we introduce the ensemble of Hamiltonians, which
	is first introduced in Ref.~\cite{sugimoto2021test} and
	 shall be studied in this article. The main parameter in the definition is a tunable range $\ell \le L$ of interactions, which allows us to consider how generically the ETH holds in realistic situations.
	\begin{Def}[Hamiltonian] \label{def:model}
		Let $T=T_{L}$ be the (left) translation operator acting on $L$ spins at the vertices of $\mathbb{T}_{L}$.
		We define the ensemble of Hamiltonians with local interactions as
		\begin{equation}
			H_{L}^{(\ell)}
			\coloneqq \sum_{j=0}^{L-1} T_{L}^{-j} \qty( h_{\ell} \otimes I_{L-\ell }  ) T_{L}^{j}\,, \quad
			h_{\ell} \coloneqq \sum_{p_{1}, \ldots ,p_{\ell} = 0}^{3} J_{p_{1},\ldots, p_{\ell}} \sigma_{1}^{ (p_{1}) } \ldots \sigma_{\ell}^{ (p_{\ell}) }
			\label{def_Hamiltonian}
		\end{equation}
		where $\ell \leq L$ is the interaction range, $I_{L-\ell}$ is the identity on the sites $\ell+1,\ldots, L$.
		Here $\sigma^{(p)}_{j}$ is the $p^{\mathrm{th}}$ Pauli matrix $\sigma^{(p)}$ acting on the site $j \in \mathbb{T}_L$, where
		we recall the
		standard Pauli matrices,
		\begin{equation} \label{eq:Pauli}
			\sigma^{(0)} = \begin{pmatrix}
				1 & 0 \\ 0 & 1
			\end{pmatrix} \quad
			\sigma^{(1)} = \begin{pmatrix}
				0 & 1 \\ 1 & 0
			\end{pmatrix}
			\quad
			\sigma^{(2)} = \begin{pmatrix}
				0 & -\I \\ \I & 0
			\end{pmatrix} \quad
			\sigma^{(3)} = \begin{pmatrix}
				1 & 0 \\ 0 & -1
			\end{pmatrix} \,.
		\end{equation}

		The $4^\ell$ coefficients $J_{p_{1},\ldots, p_{\ell}}$ are independent, identically distributed real
		Gaussian random variables with zero mean, $\mathbb{E} J_{p_{1},\ldots, p_{\ell}} = 0$, and variance
		$$ v_\ell^2\coloneqq \mathbb{E} \abs*{ J_{p_{1},\ldots, p_{\ell}} }^2 \,.
		$$

	\end{Def}

	The ensemble of Hamiltonians $h_\ell$~\eqref{def_Hamiltonian} contains prototypical spin
	 models such as
	the XYZ model, $h_{\ell} = \sum_{p=1}^{3} J_{p} \sigma_{1}^{(p)} \sigma_{2}^{(p)} $.

	Observe that the Hamiltonian $H_{L}^{(\ell)}$ is a shifted version of the same local
	Hamiltonian $h_\ell$. In particular,
	$H_{L}^{(\ell)}$ is translation invariant by construction,  i.e., $T_{L} H_{L}^{(\ell)} T_{L}^{-1} = H_{L}^{(\ell)}$.
	We impose this structure to study a Hamiltonian with a symmetry.
	In the sequel we shall exploit this feature of $H_L^{(\ell)}$ by switching  from position space to momentum space.

	\begin{lemma}\label{lemma_BlockDiagonalization}
		Let
		\begin{equation} \label{eq:PikTj}
			\Pi_{k} \coloneqq \frac{1}{L} \sum_{j=1}^{L} e^{ 2\pi i \frac{kj}{L} } T_{L}^{-j} \quad \text{for} \quad  k=0,\ldots,L-1
		\end{equation}
		be the projection operator onto the $k$-momentum space, i.e., $T_{L} \Pi_{k} = e^{ 2\pi i \frac{k}{L} } \Pi_{k}$.
		Then $H_{L}^{(\ell)}$ is block-diagonal in the momentum space representation, i.e. in the eigenbasis of $T_L$, since we have
		\begin{equation}
			H_{L}^{(\ell)} = L \sum_{k=0}^{L-1} \Pi_{k} \qty( h_{\ell} \otimes I_{L-\ell}  ) \Pi_{k}\,.
			\label{eq_ExpressionWithProjection}
		\end{equation}
	\end{lemma}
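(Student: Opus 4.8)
The plan is to derive \eqref{eq_ExpressionWithProjection} by elementary Fourier bookkeeping: first establish a handful of basic identities for the operators $\Pi_k$, and then insert the resolution of the identity into the defining sum \eqref{def_Hamiltonian}. Throughout I write $T\coloneqq T_L$, abbreviate $A\coloneqq h_\ell\otimes I_{L-\ell}$, and set $\omega\coloneqq e^{2\pi i/L}$, so that $\Pi_k=\frac1L\sum_{j=1}^L\omega^{kj}T^{-j}$.

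First I would record the eigen-relation $T^m\Pi_k=\omega^{mk}\Pi_k$ for every $m\in\mathbb Z$. The case $m=1$ is the claim already stated in the lemma, and it follows from the index shift $j\mapsto j-1$ in $T\Pi_k=\frac1L\sum_j\omega^{kj}T^{-(j-1)}$, using $T^{-L}=I$ and $\omega^{kL}=1$ to realign the summation range $\{1,\dots,L\}$ with $\{0,\dots,L-1\}$; the general $m$ follows by iteration ($T$ being unitary of order $L$). Next, the geometric sum $\sum_{m=0}^{L-1}\omega^{mn}$, which equals $L$ if $L\mid n$ and $0$ otherwise, applied to $\sum_{k=0}^{L-1}\Pi_k=\frac1L\sum_{j=1}^L\bigl(\sum_{k=0}^{L-1}\omega^{kj}\bigr)T^{-j}$ gives the completeness relation $\sum_{k=0}^{L-1}\Pi_k=T^{-L}=I$ (only $j=L$ survives the inner sum). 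The same two ingredients also yield $\Pi_k^*=\Pi_k$ and $\Pi_k\Pi_{k'}=\delta_{kk'}\Pi_k$ — e.g.\ $\Pi_k\Pi_{k'}=\frac1L\sum_j\omega^{kj}\bigl(T^{-j}\Pi_{k'}\bigr)=\frac1L\bigl(\sum_j\omega^{(k-k')j}\bigr)\Pi_{k'}=\delta_{kk'}\Pi_{k'}$ — so that $\Pi_k$ is indeed the orthogonal projection onto the $k$-momentum sector, although this last fact is not needed for \eqref{eq_ExpressionWithProjection} itself.

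With these facts in hand the lemma is a one-line computation: starting from $H_L^{(\ell)}=\sum_{j=0}^{L-1}T^{-j}A\,T^{j}$ and inserting $I=\sum_k\Pi_k$ on both sides of $A$,
\begin{equation*}
	H_L^{(\ell)}=\sum_{j=0}^{L-1}\sum_{k,k'=0}^{L-1}\bigl(T^{-j}\Pi_k\bigr)A\bigl(\Pi_{k'}T^{j}\bigr)=\sum_{k,k'=0}^{L-1}\Pi_k A\,\Pi_{k'}\sum_{j=0}^{L-1}\omega^{(k'-k)j}=L\sum_{k=0}^{L-1}\Pi_k A\,\Pi_k,
\end{equation*}
where I used $T^{-j}\Pi_k=\omega^{-jk}\Pi_k$ and $\Pi_{k'}T^{j}=\omega^{jk'}\Pi_{k'}$ (the latter obtained exactly as the former, or by taking adjoints), followed by $\sum_{j=0}^{L-1}\omega^{(k'-k)j}=L\delta_{kk'}$ since $k,k'\in\{0,\dots,L-1\}$. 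This is precisely \eqref{eq_ExpressionWithProjection}.

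I do not expect a genuine obstacle here: the proof is essentially the displayed computation. The only points requiring a little care are clerical — making sure the indices $j,k,k'$ each run over a complete residue system modulo $L$ so that the geometric sums are exact, and matching $\{1,\dots,L\}$ with $\{0,\dots,L-1\}$ via $T^{-L}=I$ when checking $T\Pi_k=\omega^k\Pi_k$ — together with not over-simplifying the operator ordering: since $A=h_\ell\otimes I_{L-\ell}$ does \emph{not} commute with $T$, the factor $\Pi_k A\,\Pi_k$ must be kept as is, and \eqref{eq_ExpressionWithProjection} block-diagonalises $H_L^{(\ell)}$ (one block per momentum $k$) rather than diagonalising it.
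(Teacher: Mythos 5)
Your proof is correct and takes essentially the same route as the paper: the paper's one-line proof substitutes the spectral decomposition $T=\sum_k e^{2\pi ik/L}\Pi_k$ into \eqref{def_Hamiltonian} and sums over $j$, which is precisely your insertion of $I=\sum_k\Pi_k$ combined with $T^{\pm j}\Pi_k=\omega^{\pm jk}\Pi_k$ and the geometric sum $\sum_j\omega^{(k'-k)j}=L\delta_{kk'}$. You simply spell out the intermediate identities that the paper leaves implicit.
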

	\begin{proof}
		This follows immediately by substituting the spectral decomposition of $T$ given by $T = \sum_{k=0}^{L-1} e^{ 2\pi i k/L } \Pi_{k}$ into \eqref{def_Hamiltonian}.
	\end{proof}
	As we will show in Lemma \ref{lemma_DimensionOfMomentumSectors}, the dimensions of each of
	the $L$ momentum sectors are almost equal to each other, $\tr_L \Pi_k \approx 2^L/L$.

	In order to present our main result, the ETH in translation-invariant systems~(Theorem~\ref{thm_mainTheoerem}), in a concise form,
	we need to introduce the \textit{microcanonical average}. 
	Below, we denote by $\ket*{ E_{\alpha}^{(k)} }$ the normalised eigenvector of $H_{L}^{(\ell)}$
	in the $k$-momentum sector with eigenvalue $E_{\alpha}$,
	i.e.~$H_L^{(\ell)}\ket*{ E_{\alpha}^{(k)} } = E_\alpha \ket*{ E_{\alpha}^{(k)} }$ and $\Pi_k \ket*{ E_{\alpha}^{(k)} } = \ket*{ E_{\alpha}^{(k)} }$.
	It is easy to see that the spectrum of $H$ in each momentum sector is simple almost surely.

	\begin{Def} [Microcanonical ensemble] \label{def:mcensemble}
		For every energy $E\in \mathbb{R}$ and energy window $\Delta>0$,
		we define the \emph{microcanonical energy shell} $\mathcal{H}_{E,\Delta}$ centered
		at energy $E$ with width $2\Delta$ by
		\begin{equation*}
			\mathcal{H}_{E,\Delta} \coloneqq \bigoplus_{k=0}^{L-1} \mathcal{H}_{E,\Delta}^{(k)}\,, \quad \text{where} \quad   \mathcal{H}_{E,\Delta}^{(k)}
			\coloneqq \mathrm{span}\Bqty{ \ket*{ E_{\alpha}^{(k)} } : \abs{ E_{\alpha}^{(k)} - E } \leq \Delta  }\,.
		\end{equation*}
		We denote the dimension of $\mathcal{H}_{E,\Delta}^{(k)}$ by $d_{E,\Delta}^{(k)}$ and that of  $\mathcal{H}_{E,\Delta}$ by $d_{E,\Delta} = \sum_{k=0}^{L-1}d_{E,\Delta}^{(k)}$.

		Whenever $d_{E, \Delta} > 0$, we define the \emph{microcanonical average}
		of any self-adjoint observable $A \in \mathbb{C}^{N \times N}$  within $\mathcal{H}_{E,\Delta}$ by
		\begin{equation} \label{eq:mcav}
			\expval*{ A }^{(\mathrm{mc})}_{\Delta}(E)
			\coloneqq \frac{1}{ d_{E,\Delta} } \sum_{k=0}^{L-1} \sum_{ \ket*{ E_{\alpha}^{(k)} } \in \mathcal{H}_{E,\Delta}^{(k)}  }
			\expval*{A}{ E_{\alpha}^{(k)} }\,.
		\end{equation}
	\end{Def}
	\begin{remark} \label{rmk:mcphysics}
		The microcanonical average mimics the microcanonical ensemble before taking the thermodynamic limit.
		In order to be physically meaningful, there are two natural requirements on the energy shell
		$\mathcal{H}_{E, \Delta}$:
		\begin{itemize}
			\item[(i)] The density of states is approximately constant in the interval $[E-\Delta, E+\Delta]$.
			\item[(ii)] The microcanonical energy shell contains $\gg1$ states, i.e.~$d_{E, \Delta} \to \infty$ as $L \to \infty$.
		\end{itemize}
		Note that for any fixed energy $E$, (i) corresponds to an upper bound and (ii) corresponds to a lower bound on $\Delta$, both being dependent on $E$.
		We point out that very close to the spectral edges with only a few states, 
		it is not guaranteed that both requirements can be satisfied simultaneously.

		Indeed, from a physics perspective, viewing $\expval*{ A }^{(\mathrm{mc})}_{\Delta}(E)$ from \eqref{eq:mcav}
		as a finite dimensional approximation of the microcanonical ensemble
		is meaningless whenever (i) and (ii) are not satisfied. However, we will simply view Definition \ref{def:mcensemble} for arbitrary $\Delta$ as an extension of the standard definition of the microcanonical average from the physics literature.
		Our main result, Theorem~\ref{thm_mainTheoerem}, will even hold with the microcanonical average in this extended sense.
	\end{remark}
	We set
	$$N\coloneqq 2^L=\mbox{dim}\mathcal{H}
	$$
	for the total Hilbert space dimension.  Our analytic results below will always be understood in the limit of large system size, i.e. $L\to \infty$, or, equivalently $N\to \infty$. We shall also use the following common notion (see, e.g., \cite{erdos2013local})
		of stochastic domination.
	\begin{Def}    \label{def_StochasticDomination}
		Given two families of non-negative random variables
		\begin{equation*}
			X \coloneqq \qty( X^{(N)}(u) : N \in \mathbb{N},\ u\in U^{(N)} )
			\qq{and}
			Y \coloneqq \qty( Y^{(N)}(u) : N \in \mathbb{N},\ u\in U^{(N)} )
		\end{equation*}
		indexed by $N$, we say that $X$ is stochastically dominated by $Y$, if for all $\xi$, $D > 0$, we have
		\begin{equation*}
			\sup_{u \in U^{(N)}}
			\mathbb{P}\bqty{ X^{(N)}(u) > N^{\xi} Y^{(N)}(u) }
			\leq N^{-D}
		\end{equation*}
		for any sufficiently large $N \ge N_{0}(\xi, D)$ and use the notation $X \prec Y$ or $X =  \mathcal{O}_\prec(Y)$ in that case.
	\end{Def}

\section{Main result in the mean-field case} \label{sec:result}
	Throughout the entire section, we are in the mean-field case $\ell = L$.
	For any $q\le L$ we also introduce the concept of \textit{$q$-local observables} for self-adjoint operators of the form
	$A = A_{q} \otimes I_{L-q}$, i.e.~$A_q$ is self-adjoint and only acts on the first $q$ sites.

	Our main result in this setting is the following theorem.
	\begin{theorem}[ETH in translation-invariant systems]
		\label{thm_mainTheoerem}
		Let $\ell = L$ and consider the Hamiltonian $H_L^{(L)}$ from \eqref{def_Hamiltonian} with eigenvalues
		$E_{\alpha}^{(k)}$ and associated normalised eigenvectors $\ket*{ E_{\alpha}^{(k)} }$.
		Then, for every $\Delta > 0$ and bounded $q$-local observable $A = A_{q} \otimes I_{L-q}$, 
		$\norm*{A} \lesssim 1$, it holds that
		\begin{equation} \label{eq_MainResult}
			\max_{\alpha, \beta} \max_{k,k'}\abs{
				\matrixel*{ E_{\alpha}^{(k)} }{ A }{ E_{\beta}^{(k')} }
				- \delta_{\alpha\beta} \delta_{k,k'} \expval*{ A }^{(\mathrm{mc})}_{\Delta}(E_{\alpha}^{(k)})
			}
			\prec \frac{1}{2^{\min\{L/2, L-q\}}}\,,
		\end{equation}
		where the maxima are taken over all indices labeling the eigenvectors of $H_L^{(L)}$.
		In particular, for $q \le L/2$ the ETH holds with optimal speed of convergence of order $1/\sqrt{N}$.
	\end{theorem}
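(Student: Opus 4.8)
The plan is to push the momentum-block structure of Lemma~\ref{lemma_BlockDiagonalization} down to the level of random matrices, apply the known eigenstate thermalisation for Wigner ensembles sector by sector, and then reduce the comparison with the microcanonical average to a deterministic computation of the sector-wise traces of $A$.

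First, I would note that the $4^L$ Pauli strings $\sigma_1^{(p_1)}\cdots\sigma_L^{(p_L)}$ are Hermitian and pairwise orthogonal with respect to $\langle X,Y\rangle = \tr(X^\dagger Y)$, hence form a basis of the real space of Hermitian $2^L\times 2^L$ matrices; consequently $h_L$ is, up to the global scale $v_L$, a GUE matrix of dimension $N = 2^L$. Since principal blocks of a GUE matrix along an orthogonal decomposition are \emph{independent} GUE matrices and $\mathcal H = \bigoplus_k \mathrm{Ran}\,\Pi_k$, the operators $G_k \coloneqq \Pi_k h_L \Pi_k$ on $\mathrm{Ran}\,\Pi_k$ (of dimension $d_k = \tr\Pi_k = (1+o(1))2^L/L$ by Lemma~\ref{lemma_DimensionOfMomentumSectors}) are independent GUE matrices, and by \eqref{eq_ExpressionWithProjection} the restriction of $H_L^{(L)}$ to the $k$-th momentum sector is $L\,G_k$. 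In particular the vectors $\ket*{E_\alpha^{(k)}}$ are the eigenvectors of $G_k$, and the rescaling by $L$ is irrelevant for eigenvectors.

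Next, for each fixed $k$ I would apply the ETH theorem for Wigner matrices~\cite{cipolloni2021eigenstate} to $G_k$ with the deterministic observable $\Pi_k A\Pi_k$; since $\norm{\Pi_k A\Pi_k} \le \norm A \lesssim 1$, this gives $\abs{\matrixel*{E_\alpha^{(k)}}{A}{E_\beta^{(k)}} - \delta_{\alpha\beta}\,\tr(\Pi_k A)/d_k} \prec d_k^{-1/2} \prec 2^{-L/2}$, uniformly in $\alpha,\beta$. For $k\ne k'$ I would condition on $G_{k'}$, write $\matrixel*{E_\alpha^{(k)}}{A}{E_\beta^{(k')}} = \braket*{E_\alpha^{(k)}}{w_\beta}$ with $w_\beta \coloneqq \Pi_k A\Pi_{k'}\ket*{E_\beta^{(k')}} \in \mathrm{Ran}\,\Pi_k$ of norm $\le \norm A$, and invoke complete delocalisation of the eigenvectors of $G_k$ (which is independent of $G_{k'}$), together with a union bound over the at most $N$ indices $\beta$, to conclude $\abs{\braket*{E_\alpha^{(k)}}{w_\beta}} \prec \norm A/\sqrt{d_k} \prec 2^{-L/2}$ uniformly in $\alpha,\beta$. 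Because $2^{-L/2}\le 2^{-\min\{L/2,L-q\}}$, this already handles every off-diagonal contribution ($\alpha\ne\beta$ or $k\ne k'$) to \eqref{eq_MainResult}.

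The main step is the deterministic estimate $\max_k\abs{\tr(\Pi_k A)/d_k - \langle A\rangle} \prec 2^{-\min\{L/2,L-q\}}$, with $\langle A\rangle \coloneqq \tr A/N = \tr(A_q)/2^q$. Expanding $\Pi_k$ via \eqref{eq:PikTj} gives $\tr(\Pi_k A) = \tfrac1L\sum_{j=0}^{L-1} e^{2\pi\I kj/L}\,\tr\!\big(T_L^{-j}(A_q\otimes I_{L-q})\big)$; the $j=0$ term equals $\tfrac1L 2^{L-q}\tr A_q = \tfrac{2^L}{L}\langle A\rangle$, which after division by $d_k$ reproduces $\langle A\rangle$ up to relative error $O(L\,2^{-L/2})$. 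For $j\ne 0$, expanding the trace in the computational basis shows $\tr\!\big(T_L^{-j}(A_q\otimes I_{L-q})\big)$ is a sum of at most $2^{p(j)}$ matrix elements of $A_q$, where $p(j)$ counts the orbits of the shift $i\mapsto i+j$ on $\mathbb{T}_L$ once only the identifications forced by $I_{L-q}$ (i.e.\ those touching sites $q+1,\dots,L$) are retained; a short counting argument yields $p(j) = \max\{q,\gcd(j,L)\}$. Hence $\abs{\tr(T_L^{-j}(A_q\otimes I_{L-q}))} \le \norm A\,2^{\max\{q,\gcd(j,L)\}}$, and using $\sum_{g\mid L}\varphi(L/g) = L$ together with the fact that every proper divisor of $L$ is $\le L/2$, one gets $\tfrac1{L d_k}\sum_{j=1}^{L-1}\abs{\tr(T_L^{-j}(A_q\otimes I_{L-q}))} \lesssim L\,2^{\max\{q,L/2\}-L} = L\,2^{-\min\{L/2,L-q\}}$; absorbing the factor $L$ into $\prec$ completes this step.

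Finally, combining the above, $\matrixel*{E_\alpha^{(k)}}{A}{E_\alpha^{(k)}} = \langle A\rangle + \mathcal O_\prec(2^{-\min\{L/2,L-q\}})$ uniformly in $k,\alpha$; averaging this over the at most $N$ eigenstates in $\mathcal H_{E,\Delta}$ (which preserves $\mathcal O_\prec$) shows $\expval*{ A }^{(\mathrm{mc})}_{\Delta}(E) = \langle A\rangle + \mathcal O_\prec(2^{-\min\{L/2,L-q\}})$ for every $E$, and subtracting the two identities yields \eqref{eq_MainResult}, the optimality statement being the case $q\le L/2$. I expect the only genuinely delicate point to be the combinatorial identity $p(j) = \max\{q,\gcd(j,L)\}$: a cruder bound still gives the ETH for $q\le L/2$ with rate $2^{-L/2}$, but the \emph{sharp} crossover exponent $\min\{L/2,L-q\}$---and, in particular, the breakdown at $q=L$ witnessed by the $j=\lfloor L/2\rfloor$ term---needs exactly this count.
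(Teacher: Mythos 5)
Your proposal is correct and follows essentially the same route as the paper: identify $h_L$ as a GUE matrix via the orthogonality of Pauli strings (Lemma~\ref{lem:mf}), use the near-equality of the momentum-sector dimensions (Lemma~\ref{lemma_DimensionOfMomentumSectors}), invoke the Wigner-matrix ETH of~\cite{cipolloni2021eigenstate} block by block (Lemma~\ref{lemma_ETHwithinMomentumSector}), handle the off-diagonal blocks by independence of the $\Pi_k h_L\Pi_k$'s and eigenvector delocalisation (matching the paper's ``standard concentration'' remark), and establish the deterministic trace comparison via the count $\max\{q,\gcd(j,L)\}$ of free indices in $\tr(T_L^{-j}A)$ (Lemmas~\ref{lemma_DifferenceBetweenSectors} and~\ref{lemma_BoundForTraceWithTranslation}), before averaging to replace $\langle A\rangle$ by the microcanonical average. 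The only cosmetic differences are that you make the GUE identification and block-independence explicit up front and phrase the off-diagonal estimate via conditioning on $G_{k'}$, whereas the paper packages the same facts into its Lemmas 3.5--3.8.
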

An extension of Theorem \ref{thm_mainTheoerem} to arbitrary dimension $d \ge 2$ is provided in Theorem~\ref{thm_mainTheoerem_Ddimension_HighDimension} in the Appendix.
	\begin{remark} [Typicality of ETH] Theorem~\ref{thm_mainTheoerem} asserts that for any fixed local observable $A$
		the ETH in the form~\eqref{eq_MainResult} holds with a very high probability, i.e. apart from an event
		of probability $N^{-D}= 2^{-LD}$, for any fixed $D$, see the precise Definition~\ref{def_StochasticDomination}.
		This exceptional event may depend on the observable $A$. However, as long as $q$ is $L$-independent (in fact some
		mild logarithmic increase is allowed), it also holds that
		\begin{equation}        \label{eq_MainResult1}
			\max_{\alpha, \beta} \max_{k,k'} \max_{A} \abs{
				\matrixel*{ E_{\alpha}^{(k)} }{ A }{ E_{\beta}^{(k')} }
				- \delta_{\alpha\beta} \delta_{k,k'} \expval*{ A }^{(\mathrm{mc})}_{\Delta}(E_{\alpha}^{(k)})
			}
			\prec \frac{1}{2^{L/2}}\,,
		\end{equation}
		i.e. we may take the supremum over all bounded $q$-local observables $A$ in~\eqref{eq_MainResult}.
		This extension is a simple consequence of choosing a sufficiently fine grid in the unit ball of the $4^q\times 4^q$ dimensional space of $q$-local observables and taking the union bound. The estimate~\eqref{eq_MainResult1} can be viewed as a very strong form of the typicality of ETH within our class of translation invariant mean field operators $H_L^{(L)}$.
		It asserts that apart from an exceptional set of the coupling constants $J_{p_1, \ldots , p_L}$ the Hamiltonian $H_L^{(L)}$ satisfies the ETH with optimal speed of convergence, uniformly in the entire spectrum and tested against all finite range ($q$-local) observables.
		The exceptional set has exponentially small measure of order $2^{-LD}$ for any $D$ if $L$ is sufficiently large.
	\end{remark}

	In Lemma~\ref{lem:mf} we will see that in the mean--field case the Hamiltonian on each momentum sector
	is a GUE matrix, in particular the density of states of $H$ follows Wigner's semicircle law.
	An elementary calculation shows that the radius of this semicircle is given by
	$$ R\coloneqq 2 \cdot 2^L\sqrt{L} v_L(1+ O(2^{-L}))\,.
	$$
	In light of Remark~\ref{rmk:mcphysics} we also mention that $\expval*{ A }^{(\mathrm{mc})}_{\Delta}(E)$
	in~\eqref{eq_MainResult} can be considered
	as an approximation of the expectation of $A$ in the microcanonical ensemble at energy $\abs*{E} \le R$ if
	\begin{equation}\label{D}
		\frac{R}{N^{2/3}} \ll \Delta \ll R- \abs*{E} \,.
	\end{equation}
	The upper bound in~\eqref{D} comes from requirement (i) in Remark~\ref{rmk:mcphysics}, while the lower bound in~\eqref{D} stems from (ii) using that the eigenvalue spacing near the spectral edge for Wigner matrices is of order $R/N^{2/3}$.

	\medskip

	For the sequel we introduce the notation
	$$
	\expval*{ A } \coloneqq  \frac{\tr A}{\tr I}
	$$
	for the normalised trace of an operator $A$ on any finite-dimensional Hilbert space, where $I$ is the identity on that space. In particular, if $A = A_{q} \otimes I_{L-q}$ is a $q$-local observable, then $\expval*{ A } = \expval*{ A_{q} }$.

	The proof of Theorem \ref{thm_mainTheoerem} crucially relies
	on the fact that in our mean-field case  $\matrixel*{ E_{\alpha}^{(k)} }{ A }{ E_{\beta}^{(k')} }$ converges
	to $ \delta_{\alpha\beta} \delta_{k,k'} \expval*{ A }$. In other words, the thermodynamics
	of the system is trivial; the thermal value of $A$ is always given by its average trace.  This is formalised in
	the following main proposition:

	\begin{prop} \label{prop:main}
		Under the assumptions of Theorem \ref{thm_mainTheoerem} it holds that
		\begin{equation}        \label{eq:inputProp}
			\max_{\alpha, \beta} \max_{k,k'} \abs{\matrixel*{ E_{\alpha}^{(k)} }{ A }{ E_{\beta}^{(k')} } - \delta_{\alpha\beta} \delta_{k,k'} \expval*{ A }} \prec \frac{1}{2^{\min\{L/2, L-q\}}}\,.
		\end{equation}
	\end{prop}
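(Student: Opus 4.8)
The plan is to combine the mean-field block structure with the Haar-invariance of the eigenvectors, which reduces \eqref{eq:inputProp} to a deterministic estimate plus a standard concentration bound. As recalled before Lemma~\ref{lem:mf}, in a fixed orthonormal basis adapted to $\mathcal H=\bigoplus_{k=0}^{L-1}\mathrm{Ran}\,\Pi_k$ the Hamiltonian $H_L^{(L)}$ is block diagonal, the $k$-th block being $L$ times an independent $\mathrm{GUE}$ matrix of size $d_k=\tr\Pi_k$. Since $\mathrm{GUE}$ is unitarily invariant, the eigenvectors of the $k$-th block form a Haar-random orthonormal basis of $\mathbb C^{d_k}$, independent over $k$ (and of the eigenvalues, though that is not needed here). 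Writing $W_k\colon\mathbb C^{d_k}\to\mathcal H$ for the fixed isometry with $W_kW_k^{*}=\Pi_k$, we then have $\ket*{E_\alpha^{(k)}}=W_kU_ke_\alpha$ with $U_k$ independent Haar unitaries, so that
\[
\matrixel*{E_\alpha^{(k)}}{A}{E_\beta^{(k')}}=\big(U_k^{*}B_{kk'}U_{k'}\big)_{\alpha\beta},\qquad B_{kk'}:=W_k^{*}AW_{k'},
\]
where $\|B_{kk'}\|\le\|A\|$, $\ \tr B_{kk}=\tr(\Pi_kA)$, and $\|B_{kk'}\|_{\mathrm{HS}}^{2}=\tr(\Pi_{k'}A\Pi_kA\Pi_{k'})\le\|A\|^{2}\min(d_k,d_{k'})$.

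The first ingredient is the deterministic estimate $\tr(\Pi_kA)/d_k=\expval*{A}+\mathcal{O}\big(L\,\|A\|\,2^{-\min\{L/2,\,L-q\}}\big)$, uniformly in $k$. To prove it, insert $\Pi_k=\tfrac1L\sum_{j=0}^{L-1}e^{2\pi\I kj/L}T_L^{-j}$: the term $j=0$ contributes $2^{L}\expval*{A}/L$, and for $1\le j\le L-1$ one bounds $\big|\tr\big(T_L^{-j}(A_q\otimes I_{L-q})\big)\big|\le\|A\|\,2^{m_j}$, where $m_j$ is the number of connected components of the graph on $\mathbb T_L$ carrying an edge $\{i,\,i+j\}$ for each $i\in\{q+1,\dots,L\}$. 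This graph splits, over the $\gcd(j,L)$ cosets of $j\mathbb Z/L\mathbb Z$, into subgraphs of cycles, and a short orbit analysis gives $m_j=\max\{q,\gcd(j,L)\}$. Hence the $j$ with $\gcd(j,L)\le q$ contribute $\lesssim L\,2^{q}$ to $\sum_{j\ne0}2^{m_j}$, while the exceptional $j$ contribute $\lesssim L\,2^{L/2}$ and only occur when $L/2>q$; dividing by $d_k=(2^{L}/L)\big(1+\mathcal{O}(L2^{-L/2})\big)$ from Lemma~\ref{lemma_DimensionOfMomentumSectors} and recombining the two error terms produces exactly $2^{-\min\{L/2,\,L-q\}}$.

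The second ingredient is concentration in a Haar basis. For a fixed $d\times d'$ matrix $M$: if $d=d'$ then $(U^{*}MU)_{\alpha\beta}$ has mean $\delta_{\alpha\beta}\,\tr M/d$ and fluctuation $\mathcal{O}_\prec(\|M\|/\sqrt d)$ for Haar $U\in U(d)$; for independent Haar $U\in U(d)$, $U'\in U(d')$ the mean of $(U^{*}MU')_{\alpha\beta}$ is $0$ with fluctuation $\mathcal{O}_\prec(\|M\|/\sqrt{\min(d,d')})$ — in both cases by Lévy concentration of a bilinear form on the sphere, with a subexponential tail that survives a union bound over the at most $N^{2}$ tuples $(\alpha,\beta,k,k')$. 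Applying this with $M=B_{kk'}$ and $\min(d_k,d_{k'})\gtrsim 2^{L}/L$, the diagonal tuples ($\alpha=\beta$, $k=k'$) give, combined with the deterministic estimate,
\[
\Big|\matrixel*{E_\alpha^{(k)}}{A}{E_\alpha^{(k)}}-\expval*{A}\Big|\le\Big|(U_k^{*}B_{kk}U_k)_{\alpha\alpha}-\tfrac{\tr(\Pi_kA)}{d_k}\Big|+\Big|\tfrac{\tr(\Pi_kA)}{d_k}-\expval*{A}\Big|\prec\frac{\|A\|}{2^{L/2}}+\frac{L\,\|A\|}{2^{\min\{L/2,L-q\}}}\prec\frac{1}{2^{\min\{L/2,L-q\}}},
\]
while for every other tuple the left side equals $|(U_k^{*}B_{kk'}U_{k'})_{\alpha\beta}|\prec\|A\|/2^{L/2}\le 2^{-\min\{L/2,L-q\}}$; the degenerate case $q=0$, with $A$ scalar, is trivial. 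Taking the maximum over all indices yields \eqref{eq:inputProp}.

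The block-$\mathrm{GUE}$ reduction is handed to us by Lemma~\ref{lem:mf} and the Haar concentration is routine; I expect the main obstacle to be the deterministic estimate — in particular establishing the identity $m_j=\max\{q,\gcd(j,L)\}$ and bookkeeping the sum $\sum_{j\ne0}2^{m_j}$ carefully enough that it lands on the threshold $2^{-\min\{L/2,\,L-q\}}$ rather than something weaker.
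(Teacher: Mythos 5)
Your proposal is correct and follows the same overall four-step skeleton as the paper (equal-sized momentum sectors, GUE within each block, ETH inside a block, comparison of block traces to the full trace, plus the independent-Haar estimate for the off-diagonal blocks $k\neq k'$). The one genuine departure is in the in-block ETH step: the paper proves this by citing \cite[Theorem 2.2]{cipolloni2021eigenstate}, a strong ETH result for general Wigner matrices, whereas you exploit the unitary invariance of GUE to write $\ket*{E_\alpha^{(k)}}=W_kU_ke_\alpha$ with independent Haar unitaries $U_k$ and then apply L\'evy concentration for quadratic/bilinear forms on the sphere, with a union bound over the polynomially many tuples $(\alpha,\beta,k,k')$. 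This is precisely the ``more elementary route'' that the paper acknowledges in the remark after Lemma~\ref{lemma_ETHwithinMomentumSector} but does not carry out, opting there for Weingarten calculus as the suggested alternative; L\'evy concentration is in fact cleaner than Weingarten for a high-probability bound, since it gives the sub-Gaussian tail directly without having to identify the leading ``ladder'' diagrams at arbitrary moment order. The trade-off is that your argument is genuinely restricted to GUE (it uses Haar invariance), while the citation route extends without change to general i.i.d.\ Wigner blocks. Your deterministic trace estimate $m_j=\max\{q,\gcd(j,L)\}$ and the resulting $\mathcal O\!\left(L\,2^{-\min\{L-q,L/2\}}\right)$ bound matches Lemmas~\ref{lemma_DifferenceBetweenSectors}--\ref{lemma_BoundForTraceWithTranslation}; the only small inaccuracies (the exact power of $L$ in front of $2^{-L/2}$ in the $d_k$ expansion) are immaterial under stochastic domination.
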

	Having Proposition \ref{prop:main} at hand, we can readily prove Theorem \ref{thm_mainTheoerem}.
	\begin{proof}[Proof of Theorem \ref{thm_mainTheoerem}]
		Averaging \eqref{eq:inputProp} for $\alpha = \beta$ and $k = k'$ according to the microcanonical average \eqref{eq:mcav}, we find that
		\begin{equation*}
			\max_{\alpha}\max_k \abs{\expval*{ A }^{(\mathrm{mc})}_{\Delta}(E_{\alpha}^{(k)}) -  \expval*{ A }} \prec \frac{1}{2^{\min\{L/2, L-q\}}}\,.
		\end{equation*}
		Combining this with \eqref{eq:inputProp}, the claim immediately follows.
	\end{proof}

	The rest of this section is devoted to the proof of Proposition \ref{prop:main}, which is conducted in four steps.

	\begin{enumerate}
		\item The momentum sectors are all of the same size with very high precision (Lemma~\ref{lemma_DimensionOfMomentumSectors}).
		\item In each momentum sector the  mean-field Hamiltonian $H_L^{(L)}$,
		represented in the eigenbasis of the translation operator $T$, is a GUE matrix (Lemma \ref{lem:mf}).
		\item The ETH holds within each momentum sector separately~(Lemma~\ref{lemma_ETHwithinMomentumSector}).
		\item The averaged trace on each momentum sector and the total averaged trace are close to each other -- at least for local observables~(Lemma~\ref{lemma_DifferenceBetweenSectors}).
	\end{enumerate}

	We shall first formulate all the four lemmas precisely and afterwards conclude the proof of Proposition \ref{prop:main}.
	\begin{lemma}[Step 1: Dimensions of momentum sectors]
		\label{lemma_DimensionOfMomentumSectors}
		The dimension $\tr_{L} \Pi_{k}$ of the $k$-momentum sectors $(k=0,\ldots, L-1)$ is almost equal to each other in the sense that we have
		\begin{equation*}
			\tr_{L} \Pi_{k} = \frac{2^{L}}{L} + \order{ L^{1/2} 2^{L/2} }\,.
		\end{equation*}
	\end{lemma}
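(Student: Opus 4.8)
The plan is to compute the trace directly from the definition \eqref{eq:PikTj}. By linearity of the trace,
\begin{equation*}
	\tr_L \Pi_k = \frac{1}{L}\sum_{j=1}^{L} e^{2\pi i kj/L}\, \tr_L\big(T_L^{-j}\big)\,,
\end{equation*}
so everything reduces to evaluating $\tr_L(T_L^{-j})$ for each $j$. First I would observe that, in the canonical product basis $\{\ket{s_1}\otimes\cdots\otimes\ket{s_L} : s_i \in \{\uparrow,\downarrow\}\}$, the operator $T_L^{-j}$ is a permutation matrix: it sends the basis vector labelled by the string $s=(s_1,\dots,s_L)$ to the one labelled by the cyclically shifted string. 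Hence $\tr_L(T_L^{-j})$ counts the basis strings fixed by the shift, i.e.\ those $s$ with $s_i = s_{i-j}$ for all $i \in \mathbb{T}_L$. These are precisely the strings that are periodic with period $\gcd(j,L)$, so there are $2^{\gcd(j,L)}$ of them (one free spin choice for each of the $\gcd(j,L)$ sites of a fundamental period). Therefore $\tr_L(T_L^{-j}) = 2^{\gcd(j,L)}$.

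Next I would split off the single dominant contribution. The term $j = L$ has $\gcd(L,L) = L$ and phase $e^{2\pi i k} = 1$, contributing exactly $2^L/L$. For every other index $j \in \{1,\dots,L-1\}$ the number $\gcd(j,L)$ is a proper divisor of $L$, hence at most $L/2$, so $\abs{e^{2\pi i kj/L}\,\tr_L(T_L^{-j})} = 2^{\gcd(j,L)} \le 2^{L/2}$. Summing the $L-1$ remaining terms and dividing by $L$ gives
\begin{equation*}
	\abs{ \tr_L \Pi_k - \frac{2^L}{L} } \le \frac{1}{L}\sum_{j=1}^{L-1} 2^{\gcd(j,L)} \le \frac{L-1}{L}\, 2^{L/2} \le 2^{L/2}\,,
\end{equation*}
which is in fact slightly stronger than the claimed $\order{L^{1/2} 2^{L/2}}$. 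If a sharper constant were wanted, one could group the sum according to the value $d = \gcd(j,L)$, there being $\varphi(L/d)$ indices with $\gcd(j,L)=d$ and $\sum_{d\mid L}\varphi(L/d) = L$, which shows the error is $(1+o(1))2^{L/2}$ for even $L$ and $\order{1}$ for odd $L$; but this refinement is not needed.

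I do not expect any real obstacle here: the only slightly non-routine input is the fixed-point count $\tr_L(T_L^{-j}) = 2^{\gcd(j,L)}$ for the cyclic shift, and the crude bound $\gcd(j,L)\le L/2$ for $j\not\equiv 0 \pmod L$ already closes the argument. The same reasoning applies with $2$ replaced by $2s+1$ for general spin $s$, and, with the obvious modifications to the shift group and the sublattice-index exponents, also underlies the analogous dimension estimate needed for the higher-dimensional extension in the Appendix.
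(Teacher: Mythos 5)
Your proof is correct, and it takes a genuinely different and more direct route than the paper. The paper proceeds via the orbit--stabilizer theorem: it decomposes the product basis into classes by stabilizer size, shows that the number $M(L)$ of aperiodic configurations satisfies $M(L) = 2^L + \order{L^{1/2}2^{L/2}}$ (where the factor $L^{1/2}$ comes from a divisor-counting bound), constructs one momentum-$k$ eigenvector per aperiodic orbit to get a lower bound $\tr_L \Pi_k \ge M(L)/L$, and then obtains the matching upper bound by complementarity from $\sum_k \tr_L \Pi_k = 2^L$. You instead compute the trace directly as a character sum, $\tr_L\Pi_k = \tfrac{1}{L}\sum_{j=1}^{L} e^{2\pi i kj/L}\tr_L(T_L^{-j})$, identify $\tr_L(T_L^{-j}) = 2^{\gcd(j,L)}$ as the fixed-point count of the cyclic shift, isolate the $j=L$ term as $2^L/L$, and bound the other $L-1$ terms uniformly by $2^{L/2}$ using $\gcd(j,L)\le L/2$. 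This is shorter, avoids the two-sided squeeze, and yields the slightly sharper error $\order{2^{L/2}}$ in place of $\order{L^{1/2}2^{L/2}}$ -- which of course still implies the lemma as stated.

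Two remarks worth noting. First, your approach generalises cleanly to the $d$-dimensional setting of the appendix: for $\vb{j}\ne \vb{0}$ one has $\tr_{\vb{L}}(T^{-\vb{j}}) = 2^{V/\mathrm{ord}(T^{\vb{j}})} \le 2^{V/2}$ since every orbit of $\langle T^{\vb{j}}\rangle$ on $\mathbb{T}_{\vb{L}}$ has the same size $\mathrm{ord}(T^{\vb{j}})\ge 2$, so the error is again $\order{2^{V/2}}$; this actually improves on the paper's bound $\order{2^{V/2+(\log_2 V)^2}}$, which carries the extra factor from the crude estimate $s(\G)\le |\G|^{\log_2|\G|}$ on the number of subgroups of $\G$. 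Second, the paper deliberately adopts the orbit--stabilizer phrasing for the sake of a uniform $d$-dimensional presentation, but the lower-bound eigenvector construction via $\mathbf{v}(\sigma,k)=\Pi_k\sigma$ plays no role in your argument since you never need the two-sided squeeze.
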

	The proof is given in Section \ref{subsec:dimmomsec}

	\begin{lemma}[Step 2: GUE in momentum blocks]\label{lem:mf}
		Each momentum-block of the mean-field Hamiltonian $H_{L}^{(L)}$, represented in an eigenbasis of $T$, is an i.i.d.~complex Gaussian Wigner matrix (GUE), whose entries have mean zero and variance
		$2^L \, L^2 \, v_L^2$.    Recall that $v_L^2 = \mathbb{E} \abs*{ J_{p_{1},\ldots, p_{L}} }^2$ from Definition \ref{def:model}.
	\end{lemma}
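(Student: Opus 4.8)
The plan is to recognise $h_L$ \emph{itself} as a GUE matrix on the full Hilbert space $\mathcal{H}\cong\mathbb{C}^{2^L}$, and then to read off the momentum blocks of $H_L^{(L)}$ from Lemma~\ref{lemma_BlockDiagonalization} as scalar multiples of compressions of $h_L$ to the fixed subspaces $\operatorname{Ran}\Pi_k$, using that such compressions preserve the GUE property.

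First I would record the elementary algebraic facts about the Pauli strings $\sigma_P\coloneqq\sigma_1^{(p_1)}\cdots\sigma_L^{(p_L)}$, indexed by $P=(p_1,\dots,p_L)\in\{0,1,2,3\}^L$: each $\sigma_P$ is Hermitian, and $\tr(\sigma_P\sigma_Q)=2^L\,\delta_{PQ}$. Both follow by tensorisation from the single-site identities $(\sigma^{(p)})^*=\sigma^{(p)}$ and $\tr(\sigma^{(p)}\sigma^{(q)})=2\delta_{pq}$. Since there are exactly $4^L=\dim_{\mathbb{R}}\{\text{Hermitian }2^L\times2^L\text{ matrices}\}$ of them, the rescaled family $\{2^{-L/2}\sigma_P\}_P$ is an orthonormal basis of the real inner-product space of Hermitian operators on $\mathcal{H}$ with the Hilbert--Schmidt inner product. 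Hence $h_L=\sum_P J_P\sigma_P=\sum_P(2^{L/2}J_P)\,(2^{-L/2}\sigma_P)$ has, in this orthonormal Hermitian basis, coordinates $2^{L/2}J_P$ that are i.i.d.\ centred real Gaussians of variance $2^L v_L^2$. Now a random Hermitian matrix whose coordinates in one (equivalently, every) orthonormal Hermitian basis form an i.i.d.\ centred Gaussian vector of variance $\sigma^2$ is precisely a GUE matrix with entry variance $\sigma^2$: orthogonal invariance of the i.i.d.\ Gaussian law, together with the fact that unitary conjugation $X\mapsto UXU^*$ is an orthogonal transformation of the Hermitian space, makes the distribution unitarily invariant, and expanding in the standard basis $\{E_{aa}\}\cup\{2^{-1/2}(E_{ab}+E_{ba})\}_{a<b}\cup\{2^{-1/2}\,\I\,(E_{ab}-E_{ba})\}_{a<b}$ identifies $\mathbb{E}\abs{(h_L)_{ab}}^2=2^L v_L^2$ for all $a,b$ together with the independence of the entries on and above the diagonal. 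Thus $h_L$ is GUE on $\mathbb{C}^{2^L}$ with entry variance $2^L v_L^2$.

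Next I would use Lemma~\ref{lemma_BlockDiagonalization} to write the $k$-momentum block of $H_L^{(L)}$, in any orthonormal basis of $\operatorname{Ran}\Pi_k$, as $L$ times the compression of $h_L$ to $\operatorname{Ran}\Pi_k$; here $\Pi_k$ is a genuine orthogonal projection since $T$ is unitary. The key step is that the compression of a GUE matrix to a fixed subspace is again GUE with the same entry variance. This follows from unitary invariance: picking a unitary $V$ with $V^*\Pi_k V$ equal to the coordinate projection onto the first $r_k\coloneqq\tr\Pi_k$ basis vectors, the matrix $V^*h_L V$ is again GUE, the compression of $\Pi_k h_L\Pi_k$ to its range is unitarily equivalent to the top-left $r_k\times r_k$ principal submatrix of $V^*h_L V$, which is GUE of size $r_k$ with entry variance $2^L v_L^2$, and a unitary change of basis inside $\operatorname{Ran}\Pi_k$ preserves the GUE law. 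Multiplying by the scalar $L$ scales the entry variance by $L^2$, yielding $2^L L^2 v_L^2$, as claimed. As an aside (not needed below, but occasionally useful), the distinct momentum blocks are in fact mutually independent, since their entries are Hilbert--Schmidt pairings of the Gaussian matrix $h_L$ against Hermitian operators supported on the mutually orthogonal subspaces $\operatorname{Ran}\Pi_k$, hence jointly Gaussian and uncorrelated.

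I expect no serious obstacle: the only genuinely non-trivial ingredient is the stability of the GUE law under compression to the $\Pi_k$-subspaces, which rests entirely on the unitary invariance of GUE. The remaining care is purely bookkeeping -- checking that $\Pi_k$ is self-adjoint so that the compression is to an honest orthogonal subspace, and tracking that the final normalisation picks up exactly the two scalar factors $2^{L/2}$ (from the Pauli normalisation $\tr(\sigma_P\sigma_Q)=2^L\delta_{PQ}$) and $L$ (from Lemma~\ref{lemma_BlockDiagonalization}).
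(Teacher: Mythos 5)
Your proposal is correct and follows essentially the same route as the paper: identify $h_L$ as a GUE matrix on the full Hilbert space, invoke unitary invariance to pass to the eigenbasis of $T$ and to the momentum blocks, then absorb the overall factor $L$ into the variance. The only difference is cosmetic—you supply the conceptual reason why $h_L$ is GUE (the rescaled Pauli strings form an orthonormal Hermitian basis, so $h_L$ has i.i.d.\ Gaussian coordinates there), where the paper merely asserts that a direct first- and second-moment computation gives the same conclusion.
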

	\begin{proof}
		In the mean-field case $\ell=L$,  a simple direct calculation of all  first and second moments
		of the matrix elements
		shows that
		the interaction matrix $h_{\ell}$ is a complex Gaussian Wigner matrix whose entries have variance $2^L v_L^2$.
		Since the transformation from the standard basis to an eigenbasis of $T$ is unitary, and the Gaussian distribution is invariant under unitary transformation, $h_{\ell}$ represented in an eigenbasis of $T$ is again a Gaussian Wigner matrix.
		Finally, the projection operators $\Pi_{k}$ in \eqref{eq_ExpressionWithProjection} set the off-diagonal blocks to zero. Incorporating the additional factor $L$ in \eqref{eq_ExpressionWithProjection} into the variance proves Lemma~\ref{lem:mf}.
	\end{proof}

	As the next step, we show that the ETH holds within each momentum sector.

	\begin{lemma}[Step 3: ETH within each momentum sector]
		\label{lemma_ETHwithinMomentumSector}
		For an  arbitrary deterministic observable $A$ with $\norm*{A} \lesssim 1$ it holds that
		\begin{equation}\label{ethsector}
			\max_{\alpha, \beta}\max_{k}    \abs{ \matrixel*{ E_{\alpha}^{(k)} }{ A }{ E_{\beta}^{(k)} } - \delta_{\alpha\beta} \frac{ \tr_{L}\qty( \Pi_{k} A \Pi_{k} ) }{ \tr_{L} \Pi_{k} } }
			\prec
			\frac{1}{2^{L/2} }\,.
		\end{equation}
	\end{lemma}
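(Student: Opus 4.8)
The plan is to reduce the statement to the known ETH for a single GUE matrix. By Lemma \ref{lem:mf}, each momentum block $H^{(k)} \coloneqq L\, \Pi_k (h_L \otimes I_0) \Pi_k$, written in an eigenbasis of $T$ restricted to the range of $\Pi_k$, is an $n_k \times n_k$ GUE matrix with $n_k = \tr_L \Pi_k$. By Lemma \ref{lemma_DimensionOfMomentumSectors} we have $n_k = 2^L/L + O(L^{1/2} 2^{L/2})$, so in particular $n_k \to \infty$ and $\log n_k \asymp L$ as $L \to \infty$; this is the regime in which the GUE ETH estimate applies. The eigenvectors $\ket*{E_\alpha^{(k)}}$ of $H_L^{(L)}$ lying in the $k$-sector are exactly the eigenvectors of $H^{(k)}$ (lifted back to $\mathcal H$), so matrix elements $\matrixel*{E_\alpha^{(k)}}{A}{E_\beta^{(k)}}$ are matrix elements of $\Pi_k A \Pi_k$, viewed as an $n_k \times n_k$ matrix in that same basis, between eigenvectors of the GUE matrix $H^{(k)}$.

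First I would invoke the single-GUE eigenstate thermalisation theorem (the "no-symmetry" ETH result of \cite{cipolloni2021eigenstate}, or the earlier forms in \cite{von2010proof, reimann2015generalization}): for an $n \times n$ GUE matrix with eigenvectors $u_1, \dots, u_n$ and any deterministic self-adjoint $B$ with $\norm{B} \lesssim 1$,
\begin{equation*}
    \max_{\alpha,\beta} \abs{ \langle u_\alpha, B u_\beta \rangle - \delta_{\alpha\beta}\, \frac{\tr B}{n} } \prec \frac{1}{\sqrt n}\,.
\end{equation*}
Apply this with $n = n_k$ and $B = B_k$ the compression of $A$ to the $n_k$-dimensional space $\mathrm{ran}\,\Pi_k$ in the chosen basis; note $\norm{B_k} \le \norm{A} \lesssim 1$ and $\tr B_k = \tr_L(\Pi_k A \Pi_k)$, so the right-hand side above is exactly the quantity appearing in \eqref{ethsector} up to the factor $\tr_L\Pi_k = n_k$ in the denominator. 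Since $1/\sqrt{n_k} = (1+o(1)) \sqrt{L}\, 2^{-L/2} \prec 2^{-L/2}$ (the extra $\sqrt L$ is absorbed by any $N^\xi = 2^{\xi L}$), this gives the bound for each fixed $k$. Finally, I take the maximum over the $L$ values of $k$: since $L \le N^\xi$ for any $\xi>0$ and all $L$ large, the union bound over $k \in \{0,\dots,L-1\}$ preserves stochastic domination (this is the standard fact that $\prec$ is stable under polynomially-many — here even fewer — suprema, cf.\ Definition \ref{def_StochasticDomination}).

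The only genuine subtlety — and the step I would treat most carefully — is making the identification in the previous paragraph rigorous: one must check that conditioning on, or more precisely using, the eigenbasis of $T$ does not break the GUE structure (it does not, because $T$ is a fixed unitary independent of the randomness and the Gaussian law of $h_L$ is unitarily invariant, as already argued in the proof of Lemma \ref{lem:mf}), and that the deterministic observable $B_k$ really is independent of the random matrix $H^{(k)}$ — again true, since $\Pi_k$ and the fixed eigenbasis of $T$ depend only on $L$, not on the couplings $J$. Once this bookkeeping is in place, the estimate is immediate from the single-matrix input. I would also remark that the $\sqrt{L}$-type corrections arising both from $n_k$ vs.\ $2^L/L$ and from the number of sectors are harmless precisely because stochastic domination tolerates factors of $N^\xi$; this is why the final rate in \eqref{ethsector} is the clean $2^{-L/2}$ without logarithmic losses.
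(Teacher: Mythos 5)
Your proposal follows essentially the same route as the paper: invoke Lemma~\ref{lem:mf} to identify each momentum block as a GUE matrix, apply the single-matrix ETH of~\cite[Theorem 2.2]{cipolloni2021eigenstate} with deterministic observable the compression of $A$ to $\mathrm{ran}\,\Pi_k$, use Lemma~\ref{lemma_DimensionOfMomentumSectors} to convert $1/\sqrt{n_k}$ into $2^{-L/2}$ absorbing the polynomial factor into stochastic domination, and take a union bound over the $L$ sectors. The extra bookkeeping you flag (independence of $\Pi_k$ and the $T$-eigenbasis from the couplings $J$, unitary invariance of the Gaussian law) is correct and is implicitly contained in the paper's proof of Lemma~\ref{lem:mf}.
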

	\begin{proof}
		For any fixed $k$, Lemma~\ref{lem:mf} asserts that
		$\Pi_k H_L^{(L)}\Pi_k $ is a standard GUE matrix (up to normalisation by $v_L$).
		Using~\cite[Theorem 2.2]{cipolloni2021eigenstate},
		therefore its eigenvectors  $\ket*{E_\alpha} = \ket*{E_\alpha^{(k)}}$
		satisfy ETH in the form that $\matrixel*{ E_{\alpha} }{ A }{ E_{\beta} }$
		is approximately given by the normalised trace of $A$ in the $k$-momentum sector
		$$
		\expval*{A}_k \coloneqq \frac{ \tr_{L}\qty( \Pi_{k} A \Pi_{k} ) }{ \tr_{L} \Pi_{k} }
		$$
		with very high probability and with
		an error given by the square root of the inverse of the dimension of the
		$k$-momentum sector,  $1/\sqrt{ \tr_L\Pi_k}$. This holds in the sense
		of stochastic domination given in Definition~\ref{def_StochasticDomination}.
		Using that  $\tr_L\Pi_k\approx 2^L/L$ from Lemma~\ref{lemma_DimensionOfMomentumSectors},
		we obtain that~\eqref{ethsector} holds for each fixed $k$, uniformly in all eigenvectors.
		Finally, the very high probability control in the stochastic domination allows us to take the
		maximum over $k=1,2,\ldots, L$ by a simple union bound.  This completes the proof
		of~\eqref{ethsector}.
	\end{proof}

	We remark that the essential ingredient of this proof, the Theorem  2.2 from \cite{cipolloni2021eigenstate},
	applies not only for the Gaussian ensemble but for arbitrary Wigner matrices with i.i.d. entries
	(with some moment condition on their entry distribution)
	and its proof is quite involved.
	However, ETH for GUE, as needed in Lemma~\ref{lemma_ETHwithinMomentumSector}, can also be proven with much more elementary methods using that the eigenvectors are columns of a Haar unitary matrix.
	Namely, moments of
	$\matrixel*{ E_{\alpha} }{ A }{ E_{\beta} } $ can be directly
	computed using Weingarten calculus~\cite{collins2022weingarten}.
	Since in~\eqref{ethsector} we aim
	at a control with very high probability, this would require to compute arbitrary high moments
	of $\matrixel*{ E_{\alpha} }{ A }{ E_{\beta} } -\delta_{\alpha, \beta} \expval*{A}_k$.
	The Weingarten formalism gives the exact answer but it is somewhat complicated
	for high moments, so identifying their leading order (given by the ``ladder'' diagrams)
	requires some elementary efforts. For brevity, we therefore relied on~\cite[Theorem 2.2]{cipolloni2021eigenstate} in the
	proof of Lemma~\ref{lemma_ETHwithinMomentumSector} above.

	\medskip

	Finally, we formulate the fourth and last
	step of the proof of Proposition \ref{prop:main} in the following lemma, the
	proof of which is given in  Section~\ref{sect_DifferenceBetweenSectors}.
	\begin{lemma}[Step 4: Traces within momentum sectors]
		\label{lemma_DifferenceBetweenSectors}
		Let $A = A_{q} \otimes I_{L-q}$ be an arbitrary $q$-local observable with $\norm*{ A } \lesssim 1$.
		Then it holds that
		\begin{equation}
			\max_k \abs{ \frac{ \tr_{L}\qty( \Pi_{k} A \Pi_{k} ) }{ \tr_{L} \Pi_{k} } - \expval*{ A } }
			\leq \order{ \frac{ L }{ 2^{\min\Bqty{L-q, L/2}} } }\,.
			\label{eq_DifferenceBetweenSectors}
		\end{equation}
		Moreover, for $q >L/2+1$ this bound is optimal (up to the factor $L$).
	\end{lemma}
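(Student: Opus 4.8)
The plan is to pass to the Fourier representation \eqref{eq:PikTj} of the momentum projection and reduce the statement to an estimate on the single ``shifted traces'' $\tr_{L}\!\qty(T_{L}^{-j}A)$. Since $\Pi_{k}$ is an orthogonal projection, $\tr_{L}(\Pi_{k}A\Pi_{k})=\tr_{L}(\Pi_{k}A)$, and inserting \eqref{eq:PikTj} gives
\begin{equation*}
	\tr_{L}(\Pi_{k}A)-\expval*{A}\,\tr_{L}\Pi_{k}=\frac1L\sum_{j=1}^{L-1}e^{2\pi\I kj/L}\,c_{j}\,,\qquad c_{j}\coloneqq\tr_{L}\!\qty(T_{L}^{-j}A)-\expval*{A}\,\tr_{L}\!\qty(T_{L}^{-j})\,,
\end{equation*}
where the $j=L$ term drops out because $c_{L}=\tr_{L}A-\expval*{A}\,2^{L}=0$. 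By Lemma~\ref{lemma_DimensionOfMomentumSectors} we have $\tr_{L}\Pi_{k}\ge 2^{L-1}/L$ once $L$ is large, hence
\begin{equation*}
	\max_{k}\abs{\frac{\tr_{L}(\Pi_{k}A\Pi_{k})}{\tr_{L}\Pi_{k}}-\expval*{A}}\le\frac{2}{2^{L}}\sum_{j=1}^{L-1}\abs{c_{j}}\,,
\end{equation*}
and it suffices to prove $\abs{c_{j}}\lesssim\norm*{A}\,2^{\max\{q,\,L/2\}}$ for every $j\in\{1,\dots,L-1\}$; summing the at most $L-1$ such bounds and dividing by $2^{L}$ then produces $\order{L\,2^{\max\{q,L/2\}-L}}=\order{L\,2^{-\min\{L-q,\,L/2\}}}$, which is \eqref{eq_DifferenceBetweenSectors}.

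To bound $c_{j}$ I would exploit that $T_{L}^{-j}$ is the operator permuting the $L$ tensor factors according to the $L$-site cyclic shift $\sigma_{j}\colon i\mapsto i-j\ (\mathrm{mod}\ L)$, whose cycles are exactly the $g\coloneqq\gcd(j,L)$ cosets of $g\mathbb Z/L\mathbb Z$ in $\mathbb Z/L\mathbb Z$. Since $A=A_{q}\otimes I_{L-q}$ acts as the identity on the passive sites $\{q+1,\dots,L\}$, taking the partial trace $\tr_{\mathrm{pass}}$ over these sites first gives $\tr_{L}\!\qty(T_{L}^{-j}A)=\tr_{q}\!\qty(A_{q}\,M_{j})$ and $\tr_{L}\!\qty(T_{L}^{-j})=\tr_{q}M_{j}$, where $M_{j}\coloneqq\tr_{\mathrm{pass}}\!\qty(T_{L}^{-j})$ is an operator on the first $q$ sites. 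The technical core is to evaluate $M_{j}$: each of the $g$ cosets of $\sigma_{j}$ is traced out separately, a coset containing no active site collapsing to the scalar $2$, and a coset containing active sites collapsing to the permutation operator of the cyclic order in which $\sigma_{j}$ visits its active sites. Since the $q$ active sites occupy precisely $\min\{q,g\}$ of the $g$ cosets, one obtains $M_{j}=2^{\max\{0,\,g-q\}}\,U_{j}$ with $U_{j}$ a permutation (hence unitary) operator on $(\mathbb C^{2})^{\otimes q}$; in particular its trace norm is $\norm*{M_{j}}_{1}=2^{\max\{q,\,g\}}$ and $\abs{\tr_{L}(T_{L}^{-j})}=2^{g}$. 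Using $g\le L/2$ for $1\le j\le L-1$ together with $\norm*{A_{q}}=\norm*{A}$ and $\abs{\expval*{A}}\le\norm*{A}$,
\begin{equation*}
	\abs{\tr_{L}(T_{L}^{-j}A)}\le\norm*{A_{q}}\,\norm*{M_{j}}_{1}\le\norm*{A}\,2^{\max\{q,L/2\}}\,,\qquad\abs{\expval*{A}\,\tr_{L}(T_{L}^{-j})}\le\norm*{A}\,2^{L/2}\,,
\end{equation*}
so $\abs{c_{j}}\le 2\norm*{A}\,2^{\max\{q,L/2\}}$, closing the estimate. (In fact $g>q$ forces $M_{j}=2^{g-q}I$ and thus $c_{j}=0$, so only the $j$ for which $T_{L}^{-j}$ has small period genuinely contribute; this is the mechanism behind the exponent $L-q$ and its sharpness.)

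For the optimality statement, where $q>L/2+1$ and $\min\{L-q,L/2\}=L-q$, I would test \eqref{eq_DifferenceBetweenSectors} against $A_{q}=\tfrac12\qty(P+P^{-1})$, with $P$ the cyclic-shift permutation operator on the $q$ active sites; this $A_{q}$ is self-adjoint with $\norm*{A_{q}}\le1$. For $j=1$ one has $g=1$ and $U_{1}=P$ (up to orientation), so $c_{1}=\tr_{q}\!\qty((A_{q}-\expval*{A})P)=\tfrac12\qty(\tr P^{2}+2^{q})-\expval*{A}\,\tr P$, and since $\tr P=2$ and $\tr P^{2}=2^{\gcd(2,q)}\le4$ this gives $\abs{c_{1}}\ge 2^{q-2}$ for large $L$. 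Now $\tr_{L}(\Pi_{k}A)-\expval*{A}\tr_{L}\Pi_{k}$ is (up to normalisation) the inverse discrete Fourier transform of $(c_{j})_{j=0}^{L-1}$ with $c_{0}\coloneqq0$, so Plancherel on $\mathbb Z/L\mathbb Z$ yields $\sum_{k}\abs{\tr_{L}(\Pi_{k}A)-\expval*{A}\tr_{L}\Pi_{k}}^{2}=\tfrac1L\sum_{j}\abs{c_{j}}^{2}\ge\tfrac{2}{L}\abs{c_{1}}^{2}$, hence $\max_{k}\abs{\tr_{L}(\Pi_{k}A)-\expval*{A}\tr_{L}\Pi_{k}}\gtrsim 2^{q}/L$; dividing by $\tr_{L}\Pi_{k}\le 2^{L+1}/L$ gives $\max_{k}\abs{\tr_{L}(\Pi_{k}A\Pi_{k})/\tr_{L}\Pi_{k}-\expval*{A}}\gtrsim 2^{q-L}=2^{-(L-q)}$, which matches \eqref{eq_DifferenceBetweenSectors} up to the factor $L$.

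The main obstacle is the exact evaluation of the partial trace $M_{j}=\tr_{\mathrm{pass}}\!\qty(T_{L}^{-j})$: one has to follow, coset by coset of the shift permutation $\sigma_{j}$, how the passive tensor legs get contracted, and verify that the outcome is $2^{\max\{0,\,\gcd(j,L)-q\}}$ times a genuine permutation operator on the $q$ active sites. Once this combinatorial identity is in place, the Fourier reduction, the crude union bound over $j$, and the Plancherel lower bound for the optimality part are all routine.
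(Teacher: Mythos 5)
Your argument is correct and follows the same overall strategy as the paper: Fourier-expand $\Pi_k$ via \eqref{eq:PikTj}, reduce to bounding the shifted traces $\tr_{L}(T_L^{-j}\mathring A)$, and exploit the cycle structure of the shift permutation. The combinatorial core is packaged differently. The paper (Lemma~\ref{lemma_BoundForTraceWithTranslation}) counts the number of independent Kronecker deltas in the product $\prod_{m>q}\delta_{s_m s_{m+j}}$ graphically, via loops on a circle. You instead introduce the partial trace $M_j=\tr_{\mathrm{pass}}(T_L^{-j})$, identify it as $2^{\max\{0,\gcd(j,L)-q\}}$ times a permutation operator on the $q$ active sites using the coset decomposition of $\mathbb{Z}/L\mathbb{Z}$ under $g\mathbb{Z}/L\mathbb{Z}$, and close with $\abs{\tr_q(A_qM_j)}\le\norm*{A_q}\,\norm*{M_j}_1=\norm*{A}\,2^{\max\{q,\gcd(j,L)\}}$; this is the same cycle combinatorics in operator-algebraic dress and arguably a cleaner way to see the bound, while the paper's loop-counting is more pictorial. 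The place where you take a genuinely different route is the optimality: the paper's Lemma~\ref{lemma_PartialTraceForBl} controls $\tr_L(T_L^{-j}B)$ for \emph{all} $j$ (using parity to fold the sum and showing the $j\ne 1$ terms are $\order{2^{L/2}}$) and thereby extracts the explicit leading term $\tfrac{2}{2^{L-q}}\cos(2\pi k/L)$, whereas you compute only $c_1$ and invoke Plancherel on $\mathbb{Z}/L\mathbb{Z}$ to convert the single large Fourier coefficient into a lower bound on $\max_k$. The Plancherel route is shorter and avoids analysing the remaining $c_j$, at the cost of not exhibiting the explicit $k$-dependence of the limiting profile. Both establish optimality up to the factor $L$, as claimed.
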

	Armed with these four lemmas, we can now turn to the proof of Proposition \ref{prop:main}.

	\begin{proof}[Proof of Proposition~\ref{prop:main}]

		First, for any $q$-local observable $A = A_{q} \otimes I_{L-q}$, we conclude from Lemma \ref{lemma_ETHwithinMomentumSector} and Lemma~\ref{lemma_DifferenceBetweenSectors} that
		\begin{equation} \label{eq:blockETH}
			\max_{\alpha, \beta}\max_{k}    \abs{ \matrixel*{ E_{\alpha}^{(k)} }{ A }{ E_{\beta}^{(k)} } - \delta_{\alpha\beta} \frac{ \tr_{L}\qty( \Pi_{k} A \Pi_{k} ) }{ \tr_{L} \Pi_{k} } }\prec \frac{1}{ 2^{\min\Bqty{L-q, L/2}} }\,.
		\end{equation}

		For the element $\matrixel*{ E_{\alpha}^{(k)} }{ A }{ E_{\beta}^{(k')} }$ with $k\neq k'$, i.e.~in off-diagonal blocks, $\ket*{ E_{\alpha}^{(k)} }$ and $\ket*{ E_{\beta}^{(k')} }$ are normalised Gaussian vectors independent of each other.
		Therefore standard concentration estimate shows that
		\begin{align}         \label{eq:betweenblockETH}
			\max_{k\neq k'}\abs{
				\matrixel*{ E_{\alpha}^{(k)} }{ A }{ E_{\beta}^{(k')} }
			}
			\prec \frac{1}{ 2^{L/2} }\,.
		\end{align}
		Combining \eqref{eq:blockETH} with \eqref{eq:betweenblockETH}, we have proven Proposition \ref{prop:main}.
	\end{proof}

	\subsection{Dimensions of momentum sectors: Proof of Lemma \ref{lemma_DimensionOfMomentumSectors}} \label{subsec:dimmomsec}
		In this section we prove Lemma \ref{lemma_DimensionOfMomentumSectors}, and establish that the sizes of the momentum sectors are almost equal.
		To this end, we show that the leading term in the size of each of the momentum blocks is given by the number of aperiodic elements in the product basis of $\mathcal{H}$.

		We present the proof using group theory notation, which is not strictly necessary for the one-dimensional case under consideration since the translation group of the torus $\mathbb{T}_{L}$ is cyclic. Nevertheless, we do it to allow for a more straightforward generalisation to the $d$-dimensional case (cf.~Lemma \ref{lemma_DimensionOfMomentumSectors_HighDimension}).
		\begin{proof}
			We introduce the following objects.
			Let $\s$ denote the canonical product basis of $\mathcal{H}$,
			\begin{equation} \label{eq:purestates}
				\s(L) \coloneqq \{ \sigma:\mathbb{T}_{L} \to \{\ket{\uparrow},\ket{\downarrow}\}\}\,,
			\end{equation}
			and let $\G$ be the group of translations of $\mathbb{T}_{L}$ generated by $T=T_L$. Note that $\G$ is a finite cyclic group of size $\abs*{\G} = L$.
			The action of $\G$ on $\s(L)$ is defined by
			\begin{equation} \label{eq:gactsons}
				(g\sigma)(x) \coloneqq \sigma(g^{-1}(x))\,,\quad x\in\mathbb{T}_{L}\,, \quad \sigma \in \s(L)\,, \quad g\in\G\,.
			\end{equation}
			In particular, the set $\s(L)$ is a disjoint union of sets $\s_b(L)$ defined by
			\begin{equation*} \label{eq:s_b}
				\s_b(L) \coloneqq \{\sigma \in \s(L) \,:\, \abs*{\G_\sigma}  = b \}\,, \quad b = 1,2,\ldots, L\,,
			\end{equation*}
			where $\G_\sigma \subset \s(L)$ is the stabilizer of $\sigma$ under the action \eqref{eq:gactsons}.
			By the orbit-stabilizer theorem, $\s_b(L) = \emptyset$ for all $b$ that do not divide $L$.
			Since the group $\G$ is cyclic, it has a unique subgroup of size $b$ for all $b\vert L$, given explicitly by
			\begin{equation*} \label{eq:G_b}
				\G^{(b)} \coloneqq \{T^{L/b},T^{2L/b},\ldots, T^{L}\}\,.
			\end{equation*}

			Observe that each $\sigma \in \s_b(L)$ corresponds to a unique
			map $\widetilde{\sigma}$ on a reduced torus
			$\s(L/b) \coloneqq \faktor{\mathbb{T}_L}{\G^{(b)}}$, which is defined by
			\begin{equation} \label{eq:reduced_sigma}
				\widetilde{\sigma}([x]) \coloneqq \sigma(x)\,, \quad [x] \in \s(L/b)\,.
			\end{equation}
			Since $\sigma$ is stabilised by $\G^{(b)}$,
			the map $\sigma \mapsto \widetilde{\sigma}$ in \eqref{eq:reduced_sigma}
			is well-defined and injective.
			In particular, $\abs*{ \s_b(L) }  \le 2^{L/b}$, and hence
			\begin{equation} \label{eq:Midentity}
				2^L = \sum\limits_{b\vert L} \abs*{ \s_b(L) }  =
				M(L) + \sum\limits_{b\vert L, b\ge 2} \abs*{ \s_b(L) }
				\le M(L) + \order{L^{1/2}2^{L/2}}\,,
			\end{equation}
			where $M(L) \coloneqq \abs*{ \s_1(L) } $ denotes the number of elements in $\s(L)$ with a trivial stabilizer. The last inequality follows from the fact that $L$ has at most $\mathcal{O}(L^{1/2})$ divisors.

			Since $M(L) \le 2^L$, we conclude from \eqref{eq:Midentity} that
			\begin{equation} \label{eq:Mbound}
				M(L) = 2^L + \order{L^{1/2}2^{L/2}}\,.
			\end{equation}

			For any $k \in \{0,\ldots,L-1\}$, we can construct an eigenvector of $T$ corresponding to the eigenvalue $e^{ 2\pi i k/L }$ by defining
			\begin{equation} \label{eq:vsigmadef}
				\vb{v}(\sigma,k) \coloneqq \Pi_{k}\sigma = \frac{1}{L}\sum\limits_{j=0}^{L-1} e^{ 2\pi i \frac{kj}{L} } T^{-j}\sigma\,,  \quad \sigma \in \s_1(L)\,.
			\end{equation}
			Since the orbit of $\sigma$ under $T$ consists of $L$ distinct basis elements, the vector $\vb{v}(\sigma,k)$ is non-zero.
			Furthermore, the vectors $\vb{v}(\sigma,k)$ and $\vb{v}(\sigma',k)$ corresponding to $\sigma$ and $\sigma'$ in disjoint orbits are linearly independent because they share no basis element.
			Therefore, the dimension of the $k$-th momentum space is bounded from below by the number of disjoint orbits in $\s_1(L)$, that is
			\begin{equation} \label{eq:tr_lower_bound}
				\tr_{L} \Pi_{k} \ge \frac{2^{L}}{L} + \order{L^{-1/2} 2^{L/2} }\,,
			\end{equation}
			where we used inequality \eqref{eq:Mbound} and the fact that all orbits in $\s_1(L)$ have size $L$. By means of \eqref{eq:tr_lower_bound}, we obtain the following chain of inequalities
			\begin{equation} \label{eq:tr_upper_bound}
				\tr_{L}\Pi_{k} = 2^L - \sum\limits_{j\neq k}\tr_{L}\Pi_{j} \le \frac{2^{L}}{L} + \order{L^{1/2} 2^{L/2} }\,,
			\end{equation}
			which, together with \eqref{eq:tr_lower_bound} concludes the proof of Lemma~\ref{lemma_DimensionOfMomentumSectors}.
		\end{proof}

	\subsection{Traces within momentum sectors: Proof of Lemma \ref{lemma_DifferenceBetweenSectors}}
		\label{sect_DifferenceBetweenSectors}
		In this section, we give a proof of Lemma~\ref{lemma_DifferenceBetweenSectors}, which evaluates the difference of the noramalised trace $\tr_{L}( \Pi_{k} A \Pi_{k})/\tr_L \Pi_k$ on a momentum sector and the full normalised trace $\expval*{A}$ for a $q$-local observable $A = A_{q} \otimes I_{L-q}$.
		We separate $A$ into the tracial part $\expval*{ A } I$ and the traceless part $\mathring{A} \coloneqq A - \expval*{ A } I$.

		\begin{proof}[Proof of Lemma~\ref{lemma_DifferenceBetweenSectors}]
			Substituting $\Pi_{k} \coloneqq \frac{1}{L} \sum_{j=1}^{L} e^{ 2\pi i \frac{kj}{L} } T_{L}^{-j}$, we obtain
			\begin{align}
				\tr_{L}( \Pi_{k} A \Pi_{k})
				&= \expval*{ A } \tr_{L}\Pi_{k} + \tr_{L}( \Pi_{k} \mathring{A} ) \nonumber \\
				&= \expval*{ A } \tr_{L}\Pi_{k} + \frac{1}{L} \sum_{j=1}^{L-1} e^{ 2\pi i \frac{kj}{L} } \tr_{L}( T_{L}^{-j} \mathring{A} )\,.
				\label{TraceWithinMomentumSector}
			\end{align}
			Then, the task is to evaluate the size of the quantity $\tr_{L}( T_{L}^{-j} \mathring{A} )$.
			\begin{lemma}
				\label{lemma_BoundForTraceWithTranslation}
				Let $A \coloneqq A_{q} \otimes I_{L-q}$ be a $q$-local observable with $\norm*{ A } \lesssim 1$.
				Then, for any $j=1,\ldots, L-1$, we have
				\begin{equation}
					\abs{ \tr_{L}( T_{L}^{-j} A ) }
					\lesssim 2^{ \max\Bqty{q, \gcd(j, L)} }\,,
					\label{eq_BoundForTraceWithTranslation}
				\end{equation}
				where $\gcd$ stands for the greatest common divisor.
			\end{lemma}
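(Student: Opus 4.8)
The plan is to compute $\tr_{L}(T_{L}^{-j}A)$ directly in the canonical product basis $\s(L)$ of $\mathcal{H}$ and to reduce the bound to a counting problem. Write $S \coloneqq \{1,\dots,q\}$ for the support of $A$ (so that $A = A_{q}\otimes I_{L-q}$ acts as the identity on the complement $S^{c}\coloneqq \mathbb{T}_{L}\setminus S$), and for $\sigma\in\s(L)$ write $\sigma|_{S},\sigma|_{S^{c}}$ for its restrictions. Using that $T_{L}^{-j}$ is a permutation matrix in the basis $\s(L)$ and that $A_{q}$ only touches $S$, a short computation expresses $\tr_{L}(T_{L}^{-j}A)$ as a sum of matrix entries of $A_{q}$ indexed by the set
\begin{equation*}
	\mathcal{C}_{j} \coloneqq \Bqty{ \sigma\in\s(L) \,:\, (T_{L}^{j}\sigma)\big|_{S^{c}} = \sigma\big|_{S^{c}} }
\end{equation*}
of configurations whose $j$-site translate agrees with them away from the support of $A$. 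Since every matrix entry of $A_{q}$ is bounded in modulus by $\norm*{A_{q}} = \norm*{A}\lesssim 1$, it suffices to prove $\abs*{\mathcal{C}_{j}} \le 2^{\max\{q,\gcd(j,L)\}}$. (Equivalently, one may write $\tr_{L}(T_{L}^{-j}A) = \tr_{S}\qty( A_{q}\,\tr_{S^{c}}T_{L}^{-j} )$ and bound the trace norm of the partial trace $\tr_{S^{c}}T_{L}^{-j}$, an operator on $\mathcal{H}_{S}$, by the same quantity; the combinatorics is identical.)

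Next I would unravel the cycle structure of the shift. Viewed as an operator permuting the $L$ tensor factors, $T_{L}^{\pm j}$ realises the cyclic shift $x\mapsto x\pm j$ on $\mathbb{T}_{L}$, whose orbits are the cosets of the subgroup $\langle j\rangle = g\,\mathbb{Z}/L\mathbb{Z}$ with $g\coloneqq\gcd(j,L)$; hence there are exactly $g$ orbits, each of size $L/g$, and two sites lie in a common orbit precisely when they are congruent modulo $g$. The constraint defining $\mathcal{C}_{j}$ forces $\sigma$, read along any one orbit as a cyclic word, to be constant except possibly across the sites belonging to $S$. Consequently, on an orbit meeting $S$ in $m\ge 1$ sites the admissible restrictions of $\sigma$ are exactly those constant on each of the $m$ arcs between consecutive $S$-sites, giving $2^{m}$ of them; on an orbit disjoint from $S$ the restriction must be globally constant, giving $2$. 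Multiplying over the $g$ orbits and using $\sum_{\text{orbits } c} m_{c} = \abs*{S} = q$, where $m_{c}$ is the number of $S$-sites on orbit $c$, gives
\begin{equation*}
	\abs*{\mathcal{C}_{j}} = \prod_{\text{orbits } c} 2^{\max\{1,m_{c}\}} = 2^{\,q + (g - p)}\,,\qquad p\coloneqq \#\{\text{orbits meeting } S\}\,.
\end{equation*}

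The one genuinely essential ingredient — and the point I expect to be the main subtlety — is the last input: since $S=\{1,\dots,q\}$ is a block of \emph{consecutive} sites, the number $p$ of orbits it meets equals the number of residues modulo $g$ represented by $q$ consecutive integers, which is exactly $\min\{q,g\}$. Substituting $p=\min\{q,g\}$ yields $\abs*{\mathcal{C}_{j}} = 2^{\,q+g-\min\{q,g\}} = 2^{\max\{q,g\}}$, hence $\abs*{\tr_{L}(T_{L}^{-j}A)} \le \norm*{A}\,2^{\max\{q,\gcd(j,L)\}} \lesssim 2^{\max\{q,\gcd(j,L)\}}$, as claimed. Besides this, the only places requiring care are the orbit-by-orbit count — in particular that orbits disjoint from $S$ still contribute a factor $2$, so one must not naïvely bound $\abs*{\mathcal{C}_{j}}$ by $2^{q}$ times the number of $j$-periodic configurations of $S^{c}$ — and keeping the translation conventions straight, which is harmless since $\gcd(j,L)=\gcd(-j,L)$ and the orbit structure is insensitive to the sign of the shift. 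As a sanity check, for $q=0$ (so $A=I$) the formula recovers $\abs*{\tr_{L}T_{L}^{-j}} = 2^{\gcd(j,L)}$.
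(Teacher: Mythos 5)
Your proof is correct and follows essentially the same approach as the paper: both reduce $\tr_L(T_L^{-j}A)$ to counting configurations compatible with a $j$-translation constraint outside $S=\{1,\dots,q\}$, then exploit the orbit (cycle) decomposition of the shift into $g=\gcd(j,L)$ cycles of length $L/g$, using that $S$ is a consecutive block so it meets exactly $\min\{q,g\}$ of them. The paper phrases the count as finding a ``minimal representation'' of the product of Kronecker deltas via a graph picture, whereas you count connected components per orbit directly and obtain the clean equality $\abs*{\mathcal{C}_j}=2^{\max\{q,g\}}$; the combinatorial content is identical.
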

			Combining \eqref{eq_BoundForTraceWithTranslation} with $\gcd(j, L) \leq L/2$ for $j=1,\ldots,L-1$ and Lemma~\ref{lemma_DimensionOfMomentumSectors} gives the bound~\eqref{eq_DifferenceBetweenSectors}. The optimality of \eqref{eq_DifferenceBetweenSectors} for $q >L/2+1$ is proven in Lemma \ref{lemma_PartialTraceForBl} below.
		\end{proof}

		It remains to give the proof of Lemma \ref{lemma_BoundForTraceWithTranslation}.

		\begin{proof}[Proof of Lemma \ref{lemma_BoundForTraceWithTranslation}]
			We choose a product basis $\Bqty*{ \ket*{ s_{1}\ldots s_{L} } \mid s_{j} \in \Bqty*{ \up,\dn } }$ to calculate the trace.
			Then, we obtain
			\begin{align}
				\abs{ \tr_{L}( T_{L}^{-j} A ) }
				&= \abs{ \sum_{ s_{1} \ldots s_{L} } \matrixel*{ s_{1+j} \ldots s_{q+j} }{ A_{q} }{ s_{1} \ldots s_{q} } \prod_{m=q+1}^{L} \delta_{ s_{m} s_{m+j} } }
				\nonumber \\
				&\lesssim \sum_{ s_{1} \ldots s_{L} } \prod_{m=q+1}^{L} \delta_{ s_{m} s_{m+j} }\,.
				\label{j_term}
			\end{align}

			Because of the product $\prod_{m=q+1}^{L} \delta_{ s_{m} s_{m+j} }$ of Kronecker deltas, not all of the summation variables $s_{1},\ldots, s_{L}$ are independent.

			To count the number of independent summations in the right-hand side of \eqref{j_term} and obtain an upper bound for $\tr_{L}( T_{L}^{-j} A )$ with $j=1,\ldots, L-1$, we count the number of independent deltas in the product
			\begin{equation} \label{eq:Gdef}
				\mathcal{G}_{q,j}^{(L)} \coloneqq \prod_{m=q+1}^{L} \delta_{ s_{m} s_{m+j} }\,.
			\end{equation}
			Here, not all of the delta functions in $\mathcal{G}_{q,j}^{(L)}$ are independent in the sense that we may express $\mathcal{G}_{q,j}^{(L)}$ with a fewer number of deltas.
			For example, we have $\mathcal{G}_{1,2}^{(4)} = \delta_{s_{2} s_{4}} \delta_{s_{3} s_{1}} \delta_{s_{4} s_{2}} = \delta_{s_{3} s_{1}} \delta_{s_{4} s_{2}}$.

			To obtain an expression of $\mathcal{G}_{q,j}^{(L)}$ with the minimal number of deltas,
			we graphically represent the product $\prod_{m=q+1}^{L} \delta_{ s_{m} s_{m+j} }$ by arranging the sites
			on a circle and representing the $\delta_{s_{m} s_{m+j}}$'s with a line connecting
			the site $m$ and $m+j$~(Figure~\ref{fig_GraphicalRepresentationOfProductOfDelta}).
			A minimal representation of $\mathcal{G}_{q,j}^{(L)}$ is obtained by removing exactly one
			delta for every occurrence of a loop in the graph of $\prod_{m=q+1}^{L} \delta_{ s_{m} s_{m+j} }$.

			The graph of $\prod_{m=q+1}^{L} \delta_{ s_{m} s_{m+j} }$ can be obtained in two steps: First, in step (i),
			drawing the graph of $\prod_{m=1}^{L} \delta_{ s_{m} s_{m+j} }$
			and second, in step (ii), removing the lines corresponding to the delta functions
			$\delta_{s_{m} s_{m+j}} \ (m=1,\ldots,q)$, which are depicted with red dashed lines in Figure~\ref{fig_GraphicalRepresentationOfProductOfDelta}.

			In the first step~(i), there are exactly $\gcd(j, L)$ loops each starting from the sites $1,\ldots, \gcd(j, L)$.
			If $q>\gcd(j, L)$, there is no loop remaining after the second step~(ii).
			Thus, we obtain a minimal representation of $\mathcal{G}_{q,j}^{(L)}$ as $\mathcal{G}_{q,j}^{(L)} = \prod_{m = q + 1}^{L} \delta_{ s_{m} s_{m+j} }$.
			If $q\leq\gcd(j, L)$, the loops starting from the sites $q+1,\ldots,\gcd(j, L)$ remain after the second step~(ii), for each of which we remove one delta to obtain a minimal representation of $\mathcal{G}_{q,j}^{(L)}$ as $\mathcal{G}_{q,j}^{(L)} = \prod_{m = \gcd(j, L) + 1}^{L} \delta_{ s_{m} s_{m+j} }$.

			In summary, we obtain a minimal representation of $\mathcal{G}_{q,j}^{(L)}$ as
			\begin{equation}
				\mathcal{G}_{q,j}^{(L)}
				= \prod_{m = \max\Bqty{q, \gcd(j, L)} + 1}^{L} \delta_{ s_{m} s_{m+j} }\,.
				\label{eq_MininalRepresentation}
			\end{equation}
			By substituting \eqref{eq_MininalRepresentation} into \eqref{j_term} we obtain
			\begin{align*}
				\abs{ \tr_{L}( T_{L}^{-j} A ) }
				&\lesssim \sum_{ s_{1} \ldots s_{L} } \prod_{m = \max\Bqty{q, \gcd(j, L)} + 1}^{L} \delta_{ s_{m} s_{m+j} }
				= 2^{ \max\Bqty{q, \gcd(j, L)} }\,. \qedhere
			\end{align*}
		\end{proof}

		\begin{figure}[tbh]
			\centering
			\includegraphics[width=\linewidth]{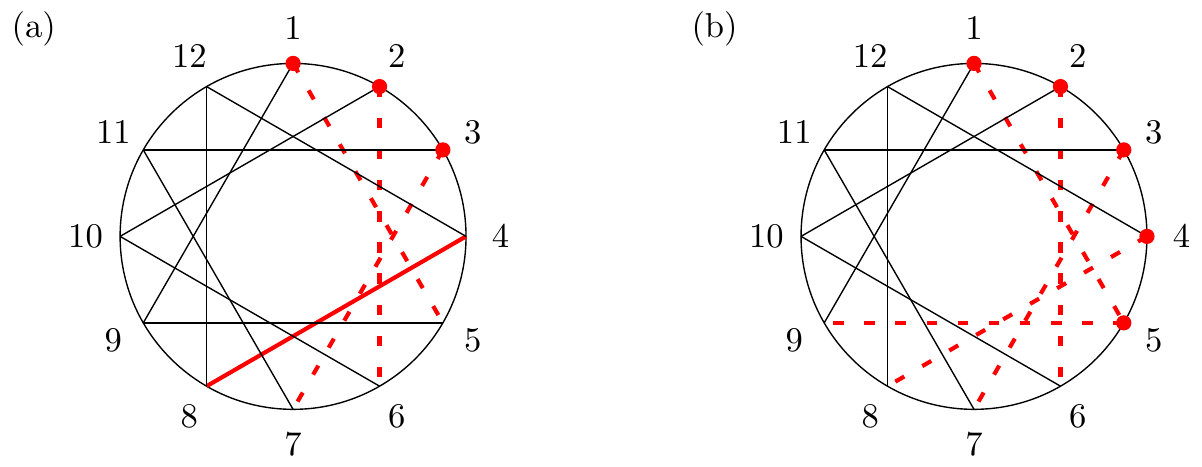}
			\caption{
				Graphical representation of the product $\prod_{m=q+1}^{L} \delta_{ s_{m+j} s_{m} }$ for (a) $L=12$, $q=3$, $j=4$ and (b) $L=12$, $q=5$, $j=4$.
				For the first case~(a) where $q < \gcd(j, L)$, there is a loop $4$-$8$-$12$-$4$ remaining after the step~(ii), which contains exactly one redundant delta function $\delta_{s_{4} s_{8}}$ depicted with a solid red line.
				In general, exactly one redundant delta function appears for every occurrence of a loop in the graph of $\prod_{m=q+1}^{L} \delta_{ s_{m+j} s_{m} }$.
			}
			\label{fig_GraphicalRepresentationOfProductOfDelta}
		\end{figure}

		Finally, we prove the optimality of~\eqref{eq_DifferenceBetweenSectors} in the regime $q > L/2+1$.
		\begin{lemma}
			\label{lemma_PartialTraceForBl}
			Let $B_{q} \coloneqq T_{q} + T_{q}^{-1} - 2^{2-q} I_{q}$, where $T_q$ is the (left) translation operator acting only on the first $q$ spins arranged on the torus $\mathbb{T}_q$. Observe that $B_q$ is Hermitian and traceless.
			Then, for $q > L/2 + 1$, the normalised trace of $B \coloneqq B_{q} \otimes I_{L-q}$ within the $k$-momentum sector is given by
			\begin{equation}
				\frac{ \tr_{L}( \Pi_{k} B \Pi_{k}) }{ \tr_{L} \Pi_{k} }
				= \frac{2}{ 2^{L-q} } \cos\qty( \frac{2\pi k}{L} ) + \order{ \frac{L}{ 2^{L/2} } }\,.
				\label{ed_PartialTraceForBl}
			\end{equation}
		\end{lemma}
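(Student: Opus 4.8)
The plan is to adapt the proof of Lemma~\ref{lemma_DifferenceBetweenSectors}. Since $B$ is traceless, \eqref{TraceWithinMomentumSector} applied with $A=B$ gives $\tr_L(\Pi_k B \Pi_k) = \frac{1}{L}\sum_{j=1}^{L-1} e^{2\pi\I kj/L}\,\tr_L(T_L^{-j}B)$, and by linearity, using $B = B_q\otimes I_{L-q}$ with $B_q = T_q + T_q^{-1} - 2^{2-q}I_q$,
\begin{equation*}
 \tr_L(T_L^{-j}B) = \tr_L\!\big(T_L^{-j}(T_q\otimes I_{L-q})\big) + \tr_L\!\big(T_L^{-j}(T_q^{-1}\otimes I_{L-q})\big) - 2^{2-q}\,\tr_L(T_L^{-j})\,.
\end{equation*}
Each of the first two traces is the trace of a \emph{product of permutation matrices} on the product basis (both $T_L^{-j}$ and $T_q^{\pm1}\otimes I_{L-q}$ permute basis vectors), hence a nonnegative integer equal to $2$ raised to the number of orbits of the induced permutation of the $L$ sites, \emph{with no cancellation}. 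Write $c(j)$ for this exponent in the case of $T_q$. The $\delta$-product computation behind Lemma~\ref{lemma_BoundForTraceWithTranslation} --- now the matrix element of $T_q^{\pm1}$ itself contributes $q$ further $\delta$'s, so that the graph of Figure~\ref{fig_GraphicalRepresentationOfProductOfDelta} becomes a disjoint union of cycles (the functional graph of a permutation of $\{1,\dots,L\}$) --- gives $\tr_L(T_L^{-j}(T_q\otimes I_{L-q})) = 2^{c(j)}$; the adjoint identity $\tr_L(M) = \overline{\tr_L(M^\dagger)}$ together with cyclicity of the trace then yields $\tr_L(T_L^{-j}(T_q^{-1}\otimes I_{L-q})) = 2^{c(L-j)}$, while $\tr_L(T_L^{-j}) = 2^{\gcd(j,L)}$.

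The core of the argument is to control $c(j)$. I claim that (i) there is exactly one $j_\ast\in\{1,L-1\}$ (which one depends on the sign convention for $T$) with $c(j_\ast)=q$, and (ii) $c(j)\le\lfloor L/2\rfloor+1 < q$ for every other $j\in\{1,\dots,L-1\}$, the final inequality being the only place the hypothesis $q>L/2+1$ is used. For (i): at $j=j_\ast$ the $\delta$'s produced by the identity factor already tie the $L-q+1$ sites $q,\dots,L$ into one cycle, while the $q$ extra $\delta$'s coming from $T_q^{\pm1}$ are all redundant ($q-1$ of them trivial, the last implied by that cycle), so the remaining $q-1$ sites stay free and $c(j_\ast)=q$. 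For (ii): the induced site permutation for $c(j)$ is ``apply the $q$-cycle on $\{1,\dots,q\}$, then translate all sites by $\pm j$''; it has a fixed site only if $j\equiv\pm1$ (forcing $j=j_\ast$), or if the single boundary site $q$ is a fixed point (forcing one exceptional value of $j$). Hence for $j\ne j_\ast$ it has at most one fixed site, so all but at most one of its cycles have length $\ge2$, giving $c(j)\le\lfloor(L-1)/2\rfloor+1$.

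With (i)--(ii) the rest is bookkeeping. Inserting the traces above and using $j\mapsto L-j$ to merge the two sums,
\begin{equation*}
 \tr_L(\Pi_k B \Pi_k) = \frac{2}{L}\sum_{j=1}^{L-1}\cos\!\Big(\tfrac{2\pi kj}{L}\Big)\,2^{c(j)}\;-\;\frac{2^{2-q}}{L}\sum_{j=1}^{L-1} e^{2\pi\I kj/L}\,2^{\gcd(j,L)}\,.
\end{equation*}
The $j=j_\ast$ term of the first sum equals $\tfrac{2^{q+1}}{L}\cos(2\pi k/L)$, since $\cos(2\pi k j_\ast/L)=\cos(2\pi k/L)$ for $j_\ast\in\{1,L-1\}$; each of the other $L-2$ terms is $\order{2^{L/2}}$ by (ii), and the second sum is of lower order because $\sum_j 2^{\gcd(j,L)}$ equals $2^{L/2}$ up to a polynomial factor while $2^{2-q}\le2^{1-L/2}$ for $q>L/2+1$. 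Thus $\tr_L(\Pi_k B\Pi_k)=\tfrac{2^{q+1}}{L}\cos(2\pi k/L)+\order{2^{L/2}}$, and dividing by $\tr_L\Pi_k = 2^L/L+\order{L^{1/2}2^{L/2}}$ from Lemma~\ref{lemma_DimensionOfMomentumSectors} gives \eqref{ed_PartialTraceForBl}.

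The main obstacle is step (ii): one has to recognise that these traces count fixed basis vectors of an explicit permutation assembled from a translation and a $q$-cycle, and then rule out a proliferation of short cycles when $j\ne j_\ast$; the fixed-point computation above does this in one line, the only delicate point being the exceptional $j$ (equal to $L+1-q$ or $q-1$ according to the convention, and $\ne j_\ast$ for $q<L$) that contributes a single extra fixed point. Everything else --- isolating the $j_\ast$ contribution, the elementary $\gcd$-sum bound, and the final division --- is routine; as a consistency check, for $q=L$ the formula reduces to the exact identity $\tr_L(\Pi_k B\Pi_k)/\tr_L\Pi_k = 2\cos(2\pi k/L)-2^{2-L}$, which follows at once from $T_L\Pi_k=e^{2\pi\I k/L}\Pi_k$.
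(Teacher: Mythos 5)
Your proof is correct, and it takes a genuinely different route at the one step that requires real work. Both you and the paper start from the identity $\tr_L(\Pi_k B\Pi_k)=\frac{1}{L}\sum_{j=1}^{L-1}e^{2\pi\I kj/L}\tr_L(T_L^{-j}B)$ and both ultimately reduce to the estimate $\tr_L(T_L^{-j}B)=\delta_{j,j_*}2^q+\order{2^{L/2}}$, but you get there differently. The paper first uses the parity operator $P_L$ (with $P_L T_L P_L=T_L^{-1}$ and $P_L B_q P_L = B_q$) to fold the sum to $j\le \lfloor L/2\rfloor$, then exploits $j<q-1$ to reduce $\tr_L(T_L^{-j}B)$ to $\tr_q(\tau_j^\dagger T_q^{-j}B_q)$ for an auxiliary $j$-local permutation $\tau_j$ on $\mathbb{T}_q$, and finally reapplies Lemma~\ref{lemma_BoundForTraceWithTranslation} termwise. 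You instead observe that $T_L^{-j}(T_q^{\pm1}\otimes I_{L-q})$ is itself a permutation of the $L$ tensor factors, so its trace is exactly $2^{\#\text{orbits}}$ with no cancellation, and you control the orbit count by a one-line fixed-point argument (the composed site permutation fixes a site only for $j=j_*$, giving $c(j_*)=q$, or at the single exceptional $j\in\{L+1-q,\ q-1\}$ where exactly one site is fixed; otherwise no fixed points, so $c(j)\le(L+1)/2<q$). The adjoint/cyclicity identity $\tr_L(T_L^{-j}(T_q^{-1}\otimes I))=2^{c(L-j)}$ then yields the same cosine sum the paper obtains via parity. Your version is more elementary and more self-contained — it bypasses both the parity operator and the somewhat implicit $\tau_j$ construction — while the paper's route has the virtue of recycling Lemma~\ref{lemma_BoundForTraceWithTranslation} verbatim. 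The $q=L$ consistency check is a nice addition. One small remark that applies equally to both proofs: inserting $\tr_L\Pi_k=2^L/L+\order{L^{1/2}2^{L/2}}$ into the quotient actually yields a relative error $\order{L^{3/2}2^{-L/2}}$, so the stated error $\order{L/2^{L/2}}$ in \eqref{ed_PartialTraceForBl} appears to be off by a factor $L^{1/2}$; this is immaterial for the lemma's use but worth flagging.
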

		This shows that the $q$-local observable $B_{q} \coloneqq T_{q} + T_{q}^{-1} - 2^{2-q} I_{q}$ saturates the bound~\eqref{eq_DifferenceBetweenSectors} when $q> L/2 + 1$.
		It also shows that the deviation of the normalised trace within a momentum sector, $\tr_{L}( \Pi_{k} B \Pi_{k}) / \tr_{L} \Pi_{k}$, from $\langle B \rangle = 0$, which is of order $2^{-(L-q)}$, becomes the dominant source of error in the ETH whenever $q> L/2 + 1$.

		\begin{proof}[Proof of Lemma~\ref{lemma_PartialTraceForBl}]
			We first reduce the range of the summation over $j$ in the generally valid expression \eqref{TraceWithinMomentumSector} applied to $B_q$.
			To do so, we introduce the parity operator $P_{L}$ defined by $P_{L} \ket*{s_{1} s_{2} \ldots s_{L}} \coloneqq \ket*{s_{L} \ldots s_{2} s_{1}}$.
			It satisfies $P_{L} T_{L} P_{L} = T_{L}^{-1}$ and $P_{L} A P_{L} = I_{L-q} \otimes (P_{q} A_{q} P_{q})$ for any $A = A_q \otimes I_{L-q}$.
			Since $B_{q}$ is invariant under the parity transformation, we have
			\begin{equation*}
				\tr_{L}( T_{L}^{-j} B )
				= \tr_{L}[ T_{L}^{+j} (I_{L-q} \otimes B_{q}) ]
				= \tr_{L}( T_{L}^{-(L-j)} B )\,,
			\end{equation*}
			Therefore, we can rewrite \eqref{TraceWithinMomentumSector} with the aid of \eqref{eq:PikTj} as
			\begin{align}
				\tr_{L}( \Pi_{k} B \Pi_{k})
				&= \frac{2}{L} \sum_{j=1}^{ \lfloor \frac{L}{2} \rfloor}
				\tr_{L}( T_{L}^{-j} \mathring{B} ) \cos\qty( \frac{2\pi k j}{L} )
				+
				\begin{cases}
					\frac{ (-1)^{k} }{L} \tr_{L}( T_{L}^{-\frac{L}{2}} \mathring{B} ) & L \ \text{even} \\
					0 & L \ \text{odd}\,.
				\end{cases}
				\label{eq_PartialTraceOfB}
			\end{align}

			When $q > L/2 + 1$, we have $j < q$ and cannot skip over the region $1,\ldots,j$ when going along the lines in the graph of $\mathcal{G}_{q,j}^{(L)}$ (recall \eqref{eq:Gdef}).
			Therefore, each line starting at one of the sites $p \in \{q+1,\ldots, q+j\}$ passes through a point in $\{1,\ldots,j\}$.
            Moreover, the correspondence between $p$ and the first intersection of the line starting at $p$ with $\{1,\ldots,j\}$ is one-to-one.
			Therefore, there exists a permutation $\tau_{j}$ on $1,\ldots,j$ such that $s_{ q + i } = s_{ \tau(i) }$ for $i=1,\ldots,j$ due to $\mathcal{G}_{q,j}^{(L)}$.
			With this permutation $\tau$, we obtain
			\begin{align}
				\tr_{L}( T_{L}^{-j} B )
				&= \sum_{s_{1}\ldots s_{L}} \matrixel*{ s_{q+1} \ldots s_{q} s_{q+1} s_{q+j} }{ B_{q} }{ s_{1} \ldots s_{q} } \mathcal{G}_{q,j}^{(L)}
				\nonumber \\
				&= \sum_{s_{1}\ldots s_{q}} \matrixel*{ s_{q+1} \ldots s_{q} s_{\tau(1)} s_{\tau(j)} }{ B_{q} }{ s_{1} \ldots s_{q} }
				\nonumber \\
				&= \tr_{q}( \tau_{j}^{\dagger} T_{q}^{-j} B_{q} )
				\nonumber \\
				&= \tr_{q}( \tau_{j}^{\dagger} T_{q}^{-(j-1)} ) + \tr_{q}( \tau_{j}^{\dagger} T_{q}^{-(j+1)} ) -2^{2-q} \tr_{q}( \tau_{j}^{\dagger} T_{q}^{-j} )\,.
				\label{eq_TraceOfBwithTranslation}
			\end{align}
			Because $\tau_{j}$ is a $j$-local operator (not necessarily self-adjoint) on the $q$-site chain, we can apply Lemma~\ref{lemma_BoundForTraceWithTranslation} to each term in \eqref{eq_TraceOfBwithTranslation}.
			Combined with $j < q-1$ and $\gcd(j,q) \leq j$, we obtain
			\begin{equation*}
				\tr_{L}( T_{L}^{-j} B )
				= \delta_{j1} 2^{q} + \order{ 2^{j} }
				= \delta_{j1} 2^{q} + \order{ 2^{L/2} }\,.
			\end{equation*}
			Substituting this result into \eqref{eq_PartialTraceOfB} and employing $\tr_{L} \Pi_{k} = \frac{ 2^{L} }{L} + \order{ L^{1/2}2^{L/2}}$ from Lemma~\ref{lemma_DimensionOfMomentumSectors}, we obtain the result~\eqref{ed_PartialTraceForBl}.
		\end{proof}

\section{Numerical verification of Theorem~\ref{thm_mainTheoerem} for $\ell = \mathcal{O}(1)$} \label{sec:numerics}
	In this section, we numerically demonstrate that Theorem~\ref{thm_mainTheoerem} also holds for the non-mean-field case of $\ell=2$.
	For that purpose,  we adopt the following measure of the ETH used in Refs~\cite{sugimoto2021test, sugimoto2022eigenstate}.  For any self-adjoint operator $A$ we define
	\begin{equation}
		\Lambda =  \Lambda(A)  \coloneqq \mean
		\max_{k} {\max_{\alpha}}^{\prime} \frac{ \abs{ \matrixel*{ E_{\alpha}^{(k)} }{ A }{ E_{\alpha}^{(k)} } - \expval*{ A }^{(\mathrm{mc})}_{\Delta}(E_{\alpha}^{(k)}) } }{ a_{\max} - a_{\min} }\, ,
        \label{eq_NumericsMCShell}
	\end{equation}
        where $a_{\max(\min)}$ is the maximum (minimum) eigenvalue of $A$.
	Here, $\mean$ denotes the average over the realisations of the Hamiltonian~\eqref{def_Hamiltonian}, and ${\max_{\alpha}}^{\prime}$ denotes the maximum over the eigenstates $\ket*{ E_{\alpha}^{(k)} }$ in the energy shell at the center of the spectrum, i.e.~those $\alpha$ for which
	\begin{equation*}
		\abs{ E_{\alpha}^{(k)} - \expval*{H} } \leq \Delta\,.
	\end{equation*}
	The width $\Delta$ of the energy interval is set to be $\Delta = 0.4/L$ such that it satisfies the two physical requirements mentioned in Remark~\ref{rmk:mcphysics} for $L\geq 6$.
	With this choice of $\Delta$, the microcanonical energy shell $\mathcal{H}_{\expval*{H}, \Delta}$ defined by \eqref{eq_NumericsMCShell} typically contains more than 10 states, while the density of states does not change too much within $\mathcal{H}_{\expval*{H}, \Delta}$.

    As the observable, we choose $A = B_{q} \otimes I_{L-q}$ with $B_{q} \coloneqq T_{q} + T_{q}^{-1} - 2^{2-q} I_{q} $ for $q=2,\ldots, L$, which saturates the upper bound in \eqref{eq_DifferenceBetweenSectors} and thus also saturates that of \eqref{eq_MainResult1}.
    With this choice we have $a_{\max} - a_{\min} \simeq 4$ for any $L$ and $q$.
    Therefore,    the ETH measure $\Lambda$ is essentially the same as the diagonal part of the left-hand side of \eqref{eq_MainResult1} in Theorem~\ref{thm_mainTheoerem}  --
    except that the maximum over $\alpha$ is now taken only at the center of the spectrum
  (and we do not take maximum over all $A$).
	This is because the eigenstate expectation value $\expval*{ A }{ E_{\alpha}^{(k)} }$ of a local observable $A = A_{q}\otimes I_{L-q} $ with $q \ll L$ typically acquires an energy dependence when $\ell \ll L$~\cite{hamazaki2018atypicality}, and the number of states becomes not enough to calculate the microcanonical average near the edges for the computationally accessible system size.
    The ETH measure $\Lambda$ satisfies reasonable thermodynamical properties. It is  (i) invariant under the linear transformation $A\mapsto aA +b$, (ii) dimensionless, and (iii) thermodynamically intensive for additive observables $A$~\cite{sugimoto2021test}.

	Figures \ref{fig_ETHMeasure}(a)-(c) depict the $L$-dependence of the ETH measure $\Lambda$ for different values of the parameter $q$.
	In particular, Figure \ref{fig_ETHMeasure}(b) illustrates that, whenever $L$ is approximately equal to $q$ so that $L-q < L/2$, the ETH measure $\Lambda$ decays as $\propto 2^{-L}$.
	The rate of this decay is slower for smaller values of $q$, but approaches $2^{-L}$ as $q$ becomes larger.
	In Figure \ref{fig_ETHMeasure}(c), we take a closer look at the $L$-dependence of $\Lambda$ for $q=6$.
	The data indicates that for $L - q \ll L/2$, $\Lambda$ decays as $\propto 1.8^{-L}$, whereas for $L\gtrsim 2q$, $\Lambda$ decays as $\propto 1.8^{-L/2}$.
	These numerical observations are in agreement with our analytical results for the mean-field case in Theorem~\ref{thm_mainTheoerem}, which predicts that the exponent of the exponential decrease of $\Lambda$ in $L$ should be twice as large in the region $L-q \ll L/2$ compared to the region $L/2 \gtrsim L-q$.
	This fact suggests that the theorem remains qualitatively valid for $\ell=\order{1}$ in the bulk of the spectrum as long as the energy shell width is appropriately chosen.

	\begin{figure}[htb]
		\centering
		\includegraphics[width=0.9\linewidth]{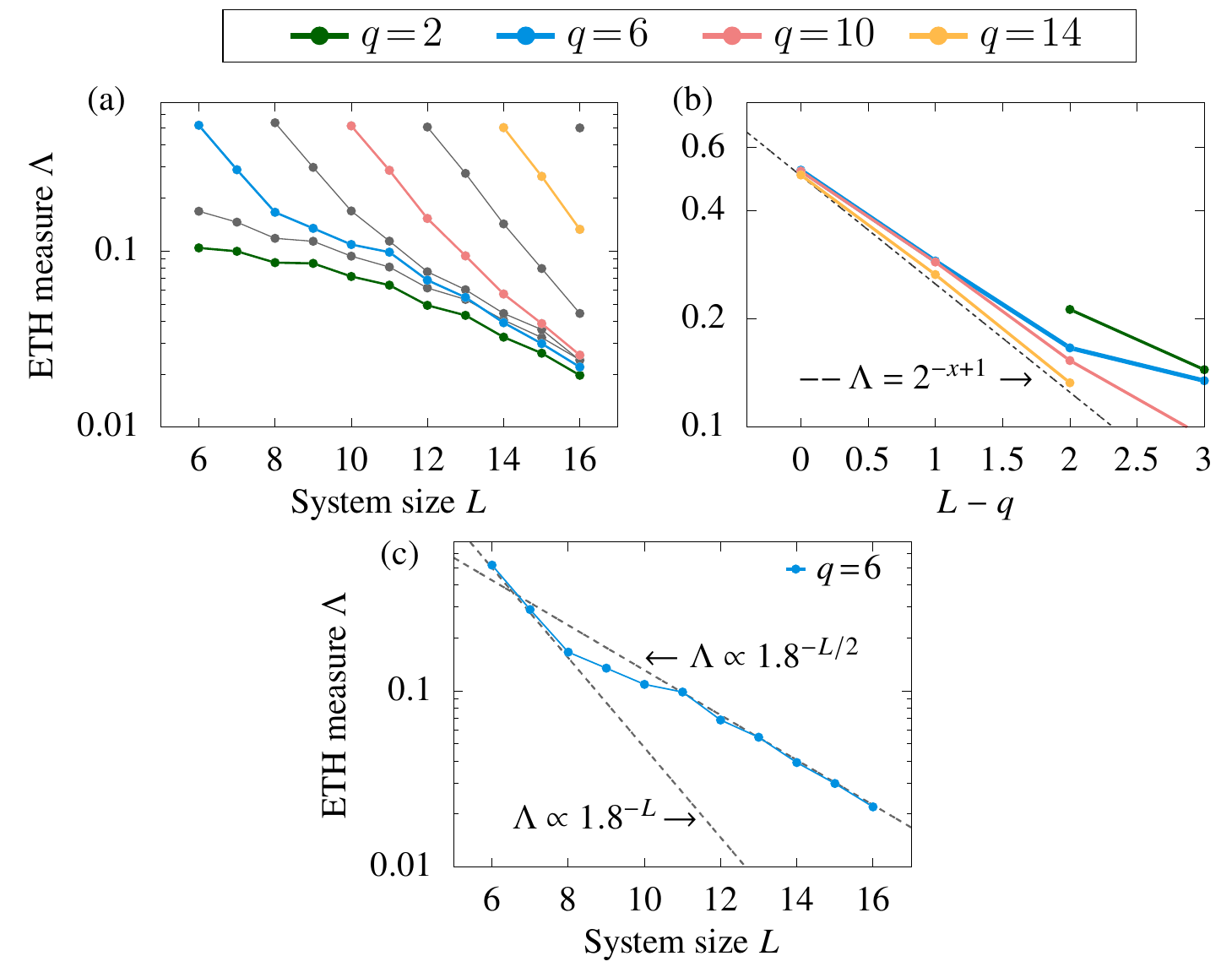}
		\caption{
			\label{fig_ETHMeasure}
			(a) System-size dependence of the ETH measure $\Lambda$ for the observable $A = B_{q} \otimes I_{L-q}$ with $B_{q} \coloneqq T_{q} + T_{q}^{-1} - 2^{2-q} I_{q}$.
			Grey curves between colored curves show intermediate values of $q$, i.e., $q=4,8,12$.
			(b) The same data as the panel~(a) for $q=2,6,10$ and $14$ plotted against $L-q$.
			When $L\simeq q$ so that $L-q < L/2$, $\Lambda$ decreases as $\protect\propto 2^{-L}$.
			(c) The same data as the panel (a) for $q=6$.
			It decreases as $\propto 1.8^{-L}$ when $L-q \ll L/2$.
			When $L/2 \gtrsim L-q$, the decrease in $\Lambda$ becomes slower and follows a different exponential decay with a base of $1.8^{-1/2}$, instead of $1.8^{-1}$.
			Aside from the value of the base,
			this behavior is consistent with \eqref{eq_MainResult1}, which predicts that the exponent of the exponential decrease of $\Lambda$ in $L$ should be twice as large in the region $L-q \ll L/2$ compared to the region $L/2 \gtrsim L-q$.
            The standard errors are smaller than the size of the data points.
            The number of samples lies between 1000 and 10000 for each datum.
		}
	\end{figure}

\backmatter


\bmhead{Acknowledgments}
LE, JH, and VR were supported by ERC Advanced Grant ``RMTBeyond'' No. 101020331.
SS was supported by KAKENHI Grant Number JP22J14935 from the Japan Society for the Promotion of Science (JSPS) and Forefront Physics and Mathematics Program to Drive Transformation (FoPM), a World-leading Innovative Graduate Study (WINGS) Program, the University of Tokyo.

\section*{Declarations}
\bmhead{Competing Interests}
The authors declare that there is no conflict of interest




\begin{appendices}
	\section{Extension to higher dimensions}
	In this appendix, we extend our main result, Theorem~\ref{thm_mainTheoerem}, to the $d$-dimensional case.

	\subsection{Multidimensional setup}
		Let $\vb{L} \coloneqq (L_{1}, \ldots, L_{d})$ be a vector of positive integers and set $V \coloneqq \prod_{s=1}^{d} L_{s}$.
		We consider a $d$-dimensional system with $V$ quantum spins at the vertices of the classical discrete torus $$\mathbb{T}_{\vb{L}} \coloneqq \bigtimes_{s=1}^{d} \mathbb{Z}/L_{s} \mathbb{Z}\,.$$
		As before, on each vertex, the one particle Hilbert space is given by $\mathbb{C}^2$ with canonical basis $\{\ket{\uparrow}, \ket{\uparrow}\}$. The corresponding $V$-particle Hilbert space is given by
		$$\mathcal{H} \coloneqq \bigotimes_{s=1}^{V} \mathbb{C}^{2} \quad \text{with dimension} \quad \mbox{dim} \mathcal{H} = 2^V\,.
		$$
		For a vector $\vb{q} = (q_{1},\ldots,q_{d}) \in \mathbb{T}_{\vb{L}}$, we introduce a rectangular subregion $\mathcal{R}_{\vb{q}} \subset \mathbb{T}_{\vb{L}}$ by
		\begin{equation*}
			\mathcal{R}_{\vb{q}}
			\coloneqq \Bqty\Big{ \vb{x} = (x_{1},\ldots,x_{d}) \in \mathbb{T}_{\vb{L}} \colon 1\leq x_{s} \leq q_{s}\,, \ s=1,\ldots,d}\,.
		\end{equation*}
		A self-adjoint operator of the form $A = A_{\vb{q}} \otimes I_{ \mathbb{T}_{\vb{L}} \setminus \mathcal{R}_{\vb{q}} }$ is called a  \textit{$\vb{q}$-local observable}, where $A_{\vb{q}}$ is self-adjoint and acts on the Hilbert space of the spins in $\mathcal{R}_{\vb{q}}$, and $I_{ \mathbb{T}_{\vb{L}}\setminus\mathcal{R}_{\vb{q}} }$ is the identity on $\mathbb{T}_{\vb{L}} \setminus \mathcal{R}_{\vb{q}}$.

		Finally, let $T_{s}$ be the (left) translation operator along the $s$-th coordinate acting on $\mathbb{T}_{\vb{L}}$.
		For a vector $\vb{j} \coloneqq (j_{1},\ldots, j_{d}) \in \mathbb{T}_{\vb{L}}$, we introduce $T^{ \vb{j} }
		\coloneqq \prod_{s=1}^{d} T_{s}^{j_{s}}$.

		The $d$-dimensional version of our model in Definition \ref{def:model} is given as follows.
		\begin{Def}
			Set the vector $\Bell \coloneqq (\ell_{1},\ldots, \ell_{d}) \in \mathbb{T}_{ \vb{L} }$ that determines the interaction range in each coordinate direction.
			We define the ensemble of Hamiltonians with local interactions as
			\begin{equation}
				H_{\vb{L}}^{({\Bell})}
				\coloneqq \sum_{ \vb{j} \in \mathbb{T}_{\vb{L}} } T^{-\vb{j}} \qty( h_{ \Bell } \otimes I_{ \mathbb{T}_{\vb{L}}\setminus
					\mathcal{R}_{\Bell} }  ) T^{\vb{j}}\, \quad \text{with} \quad   h_{\Bell} \coloneqq \sum_{p_{\vb{1}}, \ldots ,p_{\Bell} = 0}^{3} J_{p_{\vb{1}},\ldots, p_{\Bell}} \sigma_{\vb{1}}^{ (p_{\vb{1}}) } \ldots \sigma_{\Bell}^{ (p_{\Bell}) }
				\label{def_Hamiltonianapp}
			\end{equation}
			where the symbols $\vb{1}$, $\vb{2}$, \ldots, $\Bell$ label the elements of $\mathcal{R}_{\Bell}$ in an arbitrary order.
			As in \eqref{def_Hamiltonian}, $\sigma^{(p)}$ for $p \in \{0,1,2,3\}$ are the Pauli matrices \eqref{eq:Pauli}.

			The $4^{ \abs*{ \mathcal{R}_{\Bell} } }$ coefficients $J_{p_{\vb{1}},\ldots, p_{\Bell}} $ are i.i.d.~real Gaussian random variables with zero mean, $\mathbb{E} J_{p_{\vb{1}},\ldots, p_{\Bell}} =0$, and variance $v^2_{\Bell}\coloneqq \mathbb{E} \abs*{ J_{p_{\vb{1}},\ldots, p_{\Bell}} }^2 $.
		\end{Def}

	We have the following multidimensional analog of Lemma \ref{lemma_BlockDiagonalization},
	\begin{lemma}
		\label{lemma_BlockDiagonalization_HighDimension}
		Let $$\Pi_{\vb{k}} \coloneqq \frac{1}{V} \sum_{ \vb{j} \in \mathbb{T}_{\vb{L}} } e^{ 2\pi i \sum_{s=1}^{d} \frac{k_{s} j_{s}}{L_{s}} } T^{-\vb{j}} \quad \text{for} \quad  \vb{k} \in \mathbb{T}_{\vb{L}} $$
		be the projection operator onto the $\vb{k}$-momentum space, i.e.~$T_{s} \Pi_{ \vb{k} } = e^{ 2\pi i \frac{k_{s}}{L_{s}} } \Pi_{\vb{k}}$ for all $s=1,\ldots, d$.
		Then we have
		\begin{equation*}
			H_{\vb{L}}^{( \Bell )} = V \sum_{ \vb{k} \in \mathbb{T}_{\vb{L}} } \Pi_{\vb{k}} \qty( h_{ \Bell } \otimes I_{ \mathbb{T}_{\vb{L}}\setminus\mathcal{R}_{\Bell} }  ) \Pi_{\vb{k}}\,.
			\label{eq_ExpressionWithProjection_HighDimension}
		\end{equation*}
	\end{lemma}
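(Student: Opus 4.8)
The plan is to mimic the one-line proof of Lemma~\ref{lemma_BlockDiagonalization}, replacing the spectral decomposition of the single translation operator $T$ by the \emph{joint} spectral decomposition of the commuting family $T_{1},\ldots,T_{d}$. First I would record that the $T_{s}$ pairwise commute and generate a unitary representation of the finite abelian group $\mathbb{T}_{\vb{L}}$, so that the $\Pi_{\vb{k}}$ are exactly the spectral projections attached to the characters $\vb{x}\mapsto e^{2\pi i\sum_{s}k_{s}x_{s}/L_{s}}$. Concretely, the orthogonality relations for characters of $\mathbb{T}_{\vb{L}}$ give $\Pi_{\vb{k}}\Pi_{\vb{k}'}=\delta_{\vb{k},\vb{k}'}\Pi_{\vb{k}}$, $\sum_{\vb{k}\in\mathbb{T}_{\vb{L}}}\Pi_{\vb{k}}=I$, and $T_{s}\Pi_{\vb{k}}=e^{2\pi i k_{s}/L_{s}}\Pi_{\vb{k}}$ for every $s$. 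This is the multidimensional analog of the identity $T=\sum_{k}e^{2\pi i k/L}\Pi_{k}$ used in the one-dimensional case.

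Next I would expand $T^{\vb{j}}=\prod_{s=1}^{d}T_{s}^{j_{s}}$ in the joint eigenbasis: applying the eigenrelations above repeatedly yields $T^{\vb{j}}=\sum_{\vb{k}\in\mathbb{T}_{\vb{L}}}e^{2\pi i\sum_{s}k_{s}j_{s}/L_{s}}\Pi_{\vb{k}}$, and likewise $T^{-\vb{j}}=\sum_{\vb{k}}e^{-2\pi i\sum_{s}k_{s}j_{s}/L_{s}}\Pi_{\vb{k}}$. Substituting both into the definition \eqref{def_Hamiltonianapp} of $H_{\vb{L}}^{(\Bell)}$ and abbreviating $X\coloneqq h_{\Bell}\otimes I_{\mathbb{T}_{\vb{L}}\setminus\mathcal{R}_{\Bell}}$, I obtain
\[
H_{\vb{L}}^{(\Bell)}=\sum_{\vb{k},\vb{k}'\in\mathbb{T}_{\vb{L}}}\Bigl(\sum_{\vb{j}\in\mathbb{T}_{\vb{L}}}e^{2\pi i\sum_{s}(k'_{s}-k_{s})j_{s}/L_{s}}\Bigr)\,\Pi_{\vb{k}}\,X\,\Pi_{\vb{k}'}.
\]
The inner sum over $\vb{j}$ factorizes into a product of geometric sums $\sum_{j_{s}=0}^{L_{s}-1}e^{2\pi i(k'_{s}-k_{s})j_{s}/L_{s}}=L_{s}\,\delta_{k_{s},k'_{s}}$, so the bracket equals $V\,\delta_{\vb{k},\vb{k}'}$ since $\prod_{s}L_{s}=V$. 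This collapses the double sum to its diagonal and produces $H_{\vb{L}}^{(\Bell)}=V\sum_{\vb{k}}\Pi_{\vb{k}}X\Pi_{\vb{k}}$, which is the claimed formula.

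I do not anticipate a genuine obstacle here: the only point requiring care is the bookkeeping of the product structure over the $d$ coordinate directions, namely that the character sum over $\mathbb{T}_{\vb{L}}$ genuinely factorizes across directions and that the normalisation constants multiply to $V$. Once the joint spectral calculus of $T_{1},\ldots,T_{d}$ has been set up, the argument is identical in spirit to the one-dimensional case, and the off-diagonal blocks $\Pi_{\vb{k}}X\Pi_{\vb{k}'}$ with $\vb{k}\neq\vb{k}'$ are annihilated exactly as before.
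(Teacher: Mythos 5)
Your proposal is correct and follows essentially the same approach as the paper: the paper's proof simply invokes the one-dimensional Lemma~\ref{lemma_BlockDiagonalization} coordinatewise, and that lemma is proved precisely by substituting the spectral decomposition of $T$ into the Hamiltonian, which is exactly what you carry out explicitly in $d$ dimensions using the joint spectral decomposition of $T_{1},\ldots,T_{d}$ and character orthogonality. Your version is merely more detailed than the paper's one-line reference.
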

	\begin{proof}
		This follows by Lemma~\ref{lemma_BlockDiagonalization} coordinatewise.
	\end{proof}

	Denoting by $\ket*{ E_{\alpha}^{(\vb{k})} }$ the normalised eigenvector of $H_{\vb{L}}^{(\Bell)}$ belonging to an eigenvalue $E_{\alpha}$ and the $\vb{k}$-momentum sector, i.e.~$H_{\vb{L}}^{(\vb{\ell})}\ket*{ E_{\alpha}^{(\vb{k})} } = E_\alpha \ket*{ E_{\alpha}^{(\vb{k})} }$ and $\Pi_{\vb{k}} \ket*{ E_{\alpha}^{(\vb{k})} } = \ket*{ E_{\alpha}^{(\vb{k})} }$, the definition of the \textit{microcanocical average} is completely analogous to Definition \ref{def:mcensemble}.

	Moreover, whenever we use the notation $\prec$ for stochastic domination (Definition~\ref{def_StochasticDomination}), it is always understood with $N\coloneqq  2^V$.

	\subsection{Multidimensional version of the main result}
	The $d$-dimensional version of Theorem~\ref{thm_mainTheoerem} is then given as follows.
	\begin{theorem}[ETH in $d$-dimensional translation-invariant systems]
		\label{thm_mainTheoerem_Ddimension_HighDimension}
		Let $\Bell = \vb{L}$ and consider the the Hamiltonian $H_{\vb{L}}^{(\vb{L})}$ from \eqref{def_Hamiltonianapp} with eigenvalues $E_\alpha^{(\vb{k})}$ and normalised eigenvectors $\ket*{E_\alpha^{(\vb{k})}}$. Then, for every $\Delta > 0$ and bounded $\vb{q}$-local observable $A = A_{ \vb{q} } \otimes I_{ \mathbb{T}_{ \vb{L} } \setminus \mathcal{R}_{\vb{q}} }$ with $q_{s} \leq L_{s}/2$ for all $s=1,\ldots, d$, it holds that
		\begin{equation}
			\max_{\alpha, \beta} \max_{\vb{k},\vb{k'}}    \abs{
				\matrixel*{ E_{\alpha}^{(\vb{k})} }{ A }{ E_{\beta}^{(\vb{k'})} } - \delta_{\alpha\beta} \delta_{\vb{k},\vb{k'}} \expval*{ A }^{(\mathrm{mc})}_{\Delta}(E_{\alpha}^{(\vb{k})})
			}
			\prec
			\frac{1}{ 2^{V/2} }\,.
			\label{eq_MainResult_HighDimension}
		\end{equation}
		That is, the ETH holds with optimal speed of convergence.
	\end{theorem}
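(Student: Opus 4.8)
The plan is to run, essentially verbatim, the four-step proof of the one-dimensional Proposition~\ref{prop:main}, with the cyclic group $\mathbb{Z}/L\mathbb{Z}$ replaced by the abelian group $\mathbb{T}_{\vb{L}}$, the momentum label $k$ by $\vb{k}\in\mathbb{T}_{\vb{L}}$, and the interval $\{1,\dots,q\}$ by the box $\mathcal R_{\vb{q}}$. Steps~1--3 are routine. For Step~1 one reproves the $d$-dimensional analog of Lemma~\ref{lemma_DimensionOfMomentumSectors}, $\tr_V\Pi_{\vb{k}}=2^V/V+\mathcal O(V\,2^{V/2})$ uniformly in $\vb{k}$, by the same argument: a product-basis configuration with non-trivial $\mathbb{T}_{\vb{L}}$-stabiliser is fixed by $T^{\vb{g}}$ for some $\vb{g}\neq\vb{0}$, and $|\mathrm{Fix}(T^{\vb{g}})|=2^{V/\mathrm{ord}(\vb{g})}\le 2^{V/2}$, so there are at most $(V-1)2^{V/2}$ of them; the lower bound on $\tr_V\Pi_{\vb{k}}$ comes from counting aperiodic orbits and the matching upper bound from $\sum_{\vb{k}}\tr_V\Pi_{\vb{k}}=2^V$. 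For Step~2, Lemma~\ref{lemma_BlockDiagonalization_HighDimension} together with unitary invariance of the Gaussian law shows, exactly as in Lemma~\ref{lem:mf}, that each block $\Pi_{\vb{k}}H_{\vb{L}}^{(\vb{L})}\Pi_{\vb{k}}$ is a GUE matrix of dimension $\approx 2^V/V$. For Step~3 one applies \cite[Theorem~2.2]{cipolloni2021eigenstate} in every block and takes a union bound over the (polynomially many) $\vb{k}$, giving $\matrixel*{E_\alpha^{(\vb{k})}}{A}{E_\beta^{(\vb{k})}}=\delta_{\alpha\beta}\tr_V(\Pi_{\vb{k}}A\Pi_{\vb{k}})/\tr_V\Pi_{\vb{k}}+\mathcal O_\prec(2^{-V/2})$.

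The substance is Step~4, the $d$-dimensional version of Lemma~\ref{lemma_DifferenceBetweenSectors}: one must show $|\tr_V(\Pi_{\vb{k}}A\Pi_{\vb{k}})/\tr_V\Pi_{\vb{k}}-\expval*{A}|\prec 2^{-V/2}$ whenever $q_s\le L_s/2$ for all $s$. Expanding $\Pi_{\vb{k}}=V^{-1}\sum_{\vb{j}}e^{2\pi\mathrm{i}\sum_s k_s j_s/L_s}T^{-\vb{j}}$ and splitting $A=\expval*{A}I+\mathring A$ (with $\mathring A$ still $\vb{q}$-local) reduces this, together with Step~1, to $|\tr_V(T^{-\vb{j}}\mathring A)|\lesssim 2^{V/2}$ for every $\vb{j}\neq\vb{0}$. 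Evaluating the trace in the product basis as in Lemma~\ref{lemma_BoundForTraceWithTranslation} gives $|\tr_V(T^{-\vb{j}}\mathring A)|\lesssim\sum_{\{s\}}\prod_{\vb{m}\notin\mathcal R_{\vb{q}}}\delta_{s_{\vb{m}}s_{\vb{m}+\vb{j}}}=2^{c(\vb{j})}$, where $c(\vb{j})$ is the number of connected components of the graph on $\mathbb{T}_{\vb{L}}$ with an edge $\vb{m}\sim\vb{m}+\vb{j}$ for each $\vb{m}\notin\mathcal R_{\vb{q}}$. Since the components of the full graph (take $\mathcal R_{\vb{q}}=\emptyset$) are exactly the $\langle\vb{j}\rangle$-orbits, each a cycle of length $n\coloneqq\mathrm{ord}(\vb{j})\ge 2$, and deleting the $|O\cap\mathcal R_{\vb{q}}|$ edges owned by an orbit $O$ splits its cycle into $\max\{1,|O\cap\mathcal R_{\vb{q}}|\}$ pieces, one obtains the exact count $c(\vb{j})=|\mathcal R_{\vb{q}}|+\#\{\text{orbits disjoint from }\mathcal R_{\vb{q}}\}$.

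It then remains to prove $c(\vb{j})\le V/2$ for all $\vb{j}\neq\vb{0}$. The mechanism is that, because every side satisfies $q_s\le L_s/2$, a $\langle\vb{j}\rangle$-orbit meets $\mathcal R_{\vb{q}}$ in at most roughly half of its $n$ points: picking a coordinate $s$ with $j_s\neq 0$, the projection of the orbit to that coordinate is a coset of $\langle j_s\rangle$ in $\mathbb{Z}/L_s\mathbb{Z}$, i.e.\ an arithmetic progression of step $\gcd(j_s,L_s)$, which meets $\{1,\dots,q_s\}$ in at most about $q_s/\gcd(j_s,L_s)\le(L_s/2)/\gcd(j_s,L_s)$ of its terms. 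Feeding $|O\cap\mathcal R_{\vb{q}}|\lesssim n/2$ into a pigeonhole count forces the number of orbits meeting $\mathcal R_{\vb{q}}$ to be $\gtrsim 2|\mathcal R_{\vb{q}}|/n$, so that, using $|\mathcal R_{\vb{q}}|=\prod_s q_s\le V/2^d\le V/2$ and $n\ge 2$, the identity $c(\vb{j})=|\mathcal R_{\vb{q}}|+V/n-\#\{\text{orbits meeting }\mathcal R_{\vb{q}}\}$ yields $c(\vb{j})\le V/2$ up to $\mathcal O(1)$ rounding losses in the exponent, which are harmless for $\prec$. Finally, for $\vb{k}\neq\vb{k'}$ the eigenvectors $\ket*{E_\alpha^{(\vb{k})}}$ and $\ket*{E_\beta^{(\vb{k'})}}$ are independent normalised Gaussian vectors, whence $|\matrixel*{E_\alpha^{(\vb{k})}}{A}{E_\beta^{(\vb{k'})}}|\prec 2^{-V/2}$ by standard concentration; combining the diagonal- and off-diagonal-block estimates gives the $\expval*{A}$-form of the theorem, and microcanonical averaging of the diagonal matrix elements (verbatim as in the proof of Theorem~\ref{thm_mainTheoerem}) replaces $\expval*{A}$ by $\expval*{A}^{(\mathrm{mc})}_\Delta(E_\alpha^{(\vb{k})})$. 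I expect the box-versus-orbit count to be the main obstacle: unlike the interval-versus-cycle bookkeeping of Lemma~\ref{lemma_BoundForTraceWithTranslation}, an orbit can wind several times through a single coordinate, and the ceiling-rounding together with the interplay of the $d$ coordinates must be organised so that the exponent stays at $V/2+\mathcal O(1)$.
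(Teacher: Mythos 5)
Your overall four-step scheme and the orbit-counting reformulation are sound, and Steps~1--3 go through as you describe (your version of Step~1, bounding the number of periodic product-basis states by $\sum_{\vb{g}\neq\vb{0}}|\mathrm{Fix}(T^{\vb{g}})|\le (V-1)2^{V/2}$, is in fact a cleaner estimate than the one used in Lemma~\ref{lemma_DimensionOfMomentumSectors_HighDimension}, which counts subgroups instead of group elements). Your exact identity $c(\vb{j})=|\mathcal R_{\vb{q}}|+V/n-|\mathcal O_+|$, where $n=\mathrm{ord}(\vb{j})$ and $\mathcal O_+$ is the set of $\langle\vb{j}\rangle$-orbits meeting $\mathcal R_{\vb{q}}$, is also correct and is an attractive packaging of the counting; it recovers at once the paper's trivial bound $c(\vb{j})\le |\mathcal R_{\vb{q}}|+V/n\le V/2^d+V/4\le V/2$ for $n\ge 4$ and $d\ge 2$.

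The genuine gap is exactly where you flag it, and it is \emph{not} an $\mathcal O(1)$ rounding issue: for small $n$ the per-orbit intersection bound $|O\cap\mathcal R_{\vb{q}}|\le\lceil q_s/\gcd(j_s,L_s)\rceil\cdot n/n_s\approx n/2+n/n_s$ is too weak, and the loss appears \emph{linearly in $V$} in the exponent. Concretely, take $d=2$, $q_1=L_1/2$, $q_2=L_2/2$, and $\vb{j}$ of order $n=2$ (so $j_s\in\{0,L_s/2\}$): your bound only gives $|O\cap\mathcal R_{\vb{q}}|\le 2$, hence $|\mathcal O_+|\ge R/2$ with $R=|\mathcal R_{\vb{q}}|=V/4$, and then $c(\vb{j})\le R+V/2-R/2=5V/8$. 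Since $\prec$ only absorbs $N^{\xi}=2^{\xi V}$ losses for arbitrarily small $\xi$, $2^{5V/8-V}=2^{-3V/8}$ is \emph{not} $\prec 2^{-V/2}$, so the argument as written fails. The correct sharp fact, which the paper extracts via the fundamental-domain $\mathcal A_t=\{\vb x:1\le x_t\le L_t/g(\vb{j})\}$ for $g(\vb{j})\in\{2,3\}$ (cf.\ Lemma~\ref{lemma_BoundForTraceWithTranslation_HighDimension}), is that for order-$2$ orbits $|O\cap\mathcal R_{\vb{q}}|\le 1$ exactly: if $j_t=L_t/2$ and $x_t\le q_t\le L_t/2$, then $(x+\vb j)_t=x_t+L_t/2>L_t/2\ge q_t$, so the partner of $x$ is outside the box. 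Feeding $|O\cap\mathcal R_{\vb{q}}|\le 1$ into your identity gives $c\le R+V/2-R=V/2$. The analogous bound $|O\cap\mathcal R_{\vb{q}}|\le 2$ suffices for $n=3$ (then $c\le R/2+V/3\le V/8+V/3<V/2$ for $d\ge 2$). So your framework is viable, but the step you defer is precisely the one requiring a case split on $\mathrm{ord}(\vb{j})\in\{2,3\}$ versus $\ge 4$, with the small cases needing the exact geometric observation above rather than a projection-plus-ceiling estimate; in the paper this is the role of the decomposition $\mathbb T_{\vb L}=\mathcal A_t\sqcup T^{\vb j}\mathcal A_t\sqcup T^{2\vb j}\mathcal A_t$.
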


	The principal strategy for proving Theorem \ref{thm_mainTheoerem_Ddimension_HighDimension} is exactly the same as for Theorem~\ref{thm_mainTheoerem}, which has been outlined right below Proposition \ref{prop:main}. We shall hence only discuss the differences compared to the proof in Section \ref{sec:result}, which consist solely of Step 1 (generalizing Lemma \ref{lemma_DimensionOfMomentumSectors}, cf.~Lemma \ref{lemma_DimensionOfMomentumSectors_HighDimension}) and Step 4 (generalizing Lemma~\ref{lemma_DifferenceBetweenSectors}, cf.~Lemma \ref{lemma_DimensionOfMomentumSectors_HighDimension}).

	\begin{lemma}[Step 1: Dimensions of momentum sectors]
		\label{lemma_DimensionOfMomentumSectors_HighDimension}
		The dimension $\tr_{ \vb{L} } \Pi_{\vb{k}}$ of the $\vb{k}$-momentum sectors for $\vb{k} \in \mathbb{T}_{ \vb{L} }$ is almost equal to each other in the sense that we have
		\begin{equation*}
			\tr_{ \vb{L} } \Pi_{\vb{k}} = \frac{2^{V}}{V} + \order{ 2^{V/2+(\log_2V)^2} }\,.
		\end{equation*}
	\end{lemma}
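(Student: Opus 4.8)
The plan is to follow the proof of Lemma~\ref{lemma_DimensionOfMomentumSectors} line by line; the group-theoretic framing of Section~\ref{subsec:dimmomsec} was introduced precisely so that this step generalises with essentially no new ideas. Let $\G$ be the group generated by the translations $T_1,\dots,T_d$; since $T^{\vb j}=\mathrm{id}$ exactly when $\vb j\equiv\vb 0$ in $\mathbb{T}_{\vb L}$, one has $\G\cong\mathbb{T}_{\vb L}$, a finite abelian group of order $V$ acting on the set $\mathbb{T}_{\vb L}$ by the free regular translation action, hence on the product basis $\s(\vb L)=\{\sigma:\mathbb{T}_{\vb L}\to\{\ket\up,\ket\dn\}\}$ via $(g\sigma)(\vb x):=\sigma(g^{-1}\vb x)$. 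Decompose $\s(\vb L)$ into $\G$-orbits and let $M(\vb L)$ be the number of aperiodic configurations (those with trivial stabiliser). Each such $\sigma$ has an orbit of exactly $V$ distinct basis vectors, so $\vb v(\sigma,\vb k):=\Pi_{\vb k}\sigma=\frac1V\sum_{\vb j\in\mathbb{T}_{\vb L}}e^{2\pi i\sum_s k_sj_s/L_s}\,T^{-\vb j}\sigma$ is a nonzero element of the $\vb k$-momentum sector, and $\vb v(\sigma,\vb k)$, $\vb v(\sigma',\vb k)$ have disjoint supports when $\sigma,\sigma'$ lie in distinct orbits; therefore $\tr_{\vb L}\Pi_{\vb k}\ge M(\vb L)/V$, exactly as in the derivation of \eqref{eq:tr_lower_bound}.

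The only computation to redo is the number of \emph{non}-aperiodic configurations, which in $d=1$ relied on $L$ having few divisors. Here I would argue directly: a configuration is non-aperiodic iff it is fixed by some nontrivial $g\in\G$, so the non-aperiodic configurations lie in $\bigcup_{g\neq e}\mathrm{Fix}(g)$. Because $\G$ acts freely on $\mathbb{T}_{\vb L}$, the cyclic subgroup $\langle g\rangle$ partitions $\mathbb{T}_{\vb L}$ into $V/\mathrm{ord}(g)$ orbits, and $\sigma\in\mathrm{Fix}(g)$ iff $\sigma$ is constant on these orbits, so $\abs{\mathrm{Fix}(g)}=2^{V/\mathrm{ord}(g)}\le2^{V/2}$ since $\mathrm{ord}(g)\ge2$. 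Summing over the $V-1$ nontrivial elements gives $2^V-M(\vb L)\le(V-1)\,2^{V/2}$, i.e.\ $M(\vb L)=2^V+\order{V\,2^{V/2}}$ --- in fact slightly stronger than the error claimed in the lemma. (Alternatively, to mirror the one-dimensional proof one may group configurations by their stabiliser subgroup $K$, with $\abs{\mathrm{Fix}(K)}=2^{V/\abs K}\le2^{V/2}$, and use the crude bound that any group of order $V$ is generated by at most $\log_2 V$ elements --- each new generator at least doubling the order --- so that $\G$ has at most $V^{\log_2 V}=2^{(\log_2 V)^2}$ subgroups; this reproduces the stated error term.)

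To conclude, $\tr_{\vb L}\Pi_{\vb k}\ge M(\vb L)/V=2^V/V+\order{2^{V/2}}$ (the factor $1/V$ absorbing one power of $V$), and then $\sum_{\vb k\in\mathbb{T}_{\vb L}}\tr_{\vb L}\Pi_{\vb k}=2^V$ yields the matching upper bound $\tr_{\vb L}\Pi_{\vb k}=2^V-\sum_{\vb k'\neq\vb k}\tr_{\vb L}\Pi_{\vb k'}\le 2^V/V+\order{V\,2^{V/2}}$; since $V\,2^{V/2}\le2^{V/2+(\log_2 V)^2}$ for $V\ge2$, the two bounds together prove the lemma. There is no genuine obstacle: the sole ingredient beyond the one-dimensional argument is the count of periodic configurations, which is elementary once one notes that the translation group acts freely on $\mathbb{T}_{\vb L}$.
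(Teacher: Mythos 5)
Your proof is correct, and the main route you take is genuinely different from (and slightly sharper than) the paper's. Both arguments set up the same group-theoretic framework: let $\G \cong \mathbb{T}_{\vb L}$ act freely on the lattice and hence on the product basis $\s$, show $\tr_{\vb L}\Pi_{\vb k}\ge M(\vb L)/V$ by exhibiting linearly independent momentum-$\vb k$ vectors from disjoint orbits with trivial stabiliser, and then close the argument via $\sum_{\vb k}\tr_{\vb L}\Pi_{\vb k}=2^V$. The difference is in how one controls the number $2^V-M(\vb L)$ of non-aperiodic configurations. The paper's proof partitions $\s$ by the \emph{exact} stabiliser subgroup $\mathcal{K}\le\G$, bounds $\abs*{\s_\mathcal{K}}\le 2^{V/\abs*{\mathcal{K}}}\le 2^{V/2}$ for $\abs*{\mathcal{K}}\ge 2$, and then multiplies by the (crude) bound $s(\G)\le\abs*{\G}^{\log_2\abs*{\G}}=2^{(\log_2 V)^2}$ on the total number of subgroups, yielding the error $\order{2^{V/2+(\log_2 V)^2}}$ stated in the lemma. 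You instead cover the non-aperiodic configurations by $\bigcup_{g\ne e}\mathrm{Fix}(g)$ and use that $\langle g\rangle$ acts freely with orbits of size $\mathrm{ord}(g)\ge 2$, so $\abs*{\mathrm{Fix}(g)}=2^{V/\mathrm{ord}(g)}\le 2^{V/2}$; summing over the $V-1$ nontrivial elements gives $2^V-M(\vb L)\le (V-1)2^{V/2}$, hence $M(\vb L)=2^V+\order{V\,2^{V/2}}$. Counting by elements rather than by subgroups replaces the superpolynomial factor $2^{(\log_2 V)^2}$ by the polynomial factor $V$, which is a genuine improvement for higher-dimensional tori (where $\G$ can have vastly more subgroups than elements), while being trivially enough for the lemma as stated since $V\le 2^{(\log_2 V)^2}$ for $V\ge 2$. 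In the one-dimensional case the paper's divisor-counting route gives the marginally sharper $\order{L^{1/2}2^{L/2}}$ versus your $\order{L\,2^{L/2}}$, but this is immaterial here. Your parenthetical alternative reproduces the paper's argument, so you have in fact identified both proofs; the element-counting one is the cleaner of the two.
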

	\begin{proof}
		Let $\s=\s(\vb{L})$ denote the canonical product basis of $\mathcal{H}$, as in \eqref{eq:purestates}, and let $\G$ be the commutative group generated by the translation operators $\{T_s\}_{s=1}^d$. The action of $\G$ on $\s$ is defined by \eqref{eq:gactsons}.

		In general, the group $\G$ is not cyclic, hence the subgroups of $\G$ are not uniquely determined by their size. However, $\s$ can be decomposed into a disjoint union of sets $\s_{\mathcal{K}}=\s_{\mathcal{K}}(\vb{L})$ defined by
		\begin{equation*} \label{eq:s_K}
			\s_{\mathcal{K}} \coloneqq \{\sigma \in \s \,:\, \G_\sigma = \mathcal{K} \}\,,
		\end{equation*}
		where $\G_\sigma \subset \G$ is the stabilizer of $\sigma$ under the action \eqref{eq:gactsons}, and $\mathcal{K} \le \G$ is a subgroup of $\G$. Similarly to \eqref{eq:reduced_sigma}, for any subgroup $\mathcal{K}$ of $\G$, we define the map
		\begin{equation*}
			\varphi_{\mathcal{K}} : \s_{\mathcal{K}} \to \left(\faktor{\mathbb{T}_{\vb{L}} }{\mathcal{K}} \to \{\uparrow,\downarrow\}\right)\,, \quad  \bigl(\varphi_\mathcal{K}(\sigma)\bigr)([x]) \coloneqq \sigma(x)\,,  \quad [x] \in \faktor{\mathbb{T}_{\vb{L}}}{\mathcal{K}}\,,
		\end{equation*}
which is easily seen to be an injection and hence
		and injection,
		 $\abs*{ \s_\mathcal{K} }  \le 2^{V/ \abs*{ \mathcal{K} } }$.
Therefore, denoting the number of elements in $\s$ with a trivial stabilizer by $M(\vb{L})$, we obtain
		\begin{equation*}
			2^V = \sum_{\mathcal{K} \le G} \abs*{ \s_\mathcal{K} }  = M(\vb{L}) + \sum_{K \le G, \abs*{ \mathcal{K} } \ge 2} \abs*{ \s_\mathcal{K} }  \le M(\vb{L}) + s(\G)2^{V/2}\,,
		\end{equation*}
		where $s(\G)$ denotes the number of subgroups of $\G$.
		Combining this with the following well-known bound\footnote{More precisely, in order to see that \eqref{eq:num_subgroups} holds, observe that for any subgroup $\mathcal{K}$ of $\G$ and any $g \in \G\backslash\mathcal{K}$, the size of the subgroup generated by $\mathcal{K}$ and $g$ is at least $2 \abs*{ \mathcal{K} } $. Therefore, any subgroup $\mathcal{K}$ is generated by at most $\log_2 \abs*{ \G } $ elements, hence the set of all subgroups of $\G$ can be injectively mapped to $\G^{\log_2 \abs*{ \G } }$.}
		\begin{equation} \label{eq:num_subgroups}
			s(\G) \le \abs*{ \G } ^{\log_2 \abs*{ \G } }
		\end{equation}
	and the trivial estimate $M(\vb{L}) \le 2^V$, we conclude that
		\begin{equation} \label{eq:multi_d_bound}
			M(\vb{L}) = 2^V + \order{2^{V/2+(\log_2V)^2}}.
		\end{equation}

		The construction of a linearly independent vectors with a fixed momentum $\vb{k} \in \mathbb{T}_{\vb{L}}$ for each disjoint orbit with a trivial stabilizer is analogous to \eqref{eq:vsigmadef}. Using estimates analogous to \eqref{eq:tr_lower_bound} and \eqref{eq:tr_upper_bound} together with \eqref{eq:multi_d_bound} concludes the proof of Lemma~\ref{lemma_DimensionOfMomentumSectors_HighDimension}.
	\end{proof}

	Finally, we discuss the generalisation of Step 4, i.e.~Lemma \ref{lemma_DifferenceBetweenSectors}.

	\begin{lemma}[Step 4: Traces within momentum sectors]
		\label{lemma_DifferenceBetweenSectors_HighDimension}
		Let $A = A_{ \vb{q} } \otimes I_{ \mathbb{T}_{ \vb{L} } \setminus \mathcal{R}_{\vb{q}} }$ be a bounded $\vb{q}$-local observable with $q_{s} \leq L_{s}/2$ for all $s=1,\ldots, d$.
		Then it holds that
		\begin{equation}
			\max_{\vb{k}}    \abs{ \frac{ \tr_{ \vb{L} }\qty( \Pi_{ \vb{k} } A \Pi_{ \vb{k} } ) }{ \tr_{ \vb{L} } \Pi_{ \vb{k} } } - \expval*{ A } }
			\leq \order{ \frac{ V }{ 2^{V/2} } }\,.
			\label{eq_DifferenceBetweenSectors_HighDimension}
		\end{equation}
	\end{lemma}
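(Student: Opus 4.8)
The plan is to mimic the proof of Lemma~\ref{lemma_DifferenceBetweenSectors} from Section~\ref{sect_DifferenceBetweenSectors} step by step, replacing the interval $\{1,\dots,q\}\subset\mathbb{T}_L$ by the box $\mathcal{R}_{\vb{q}}\subset\mathbb{T}_{\vb{L}}$ and the cyclic group generated by $T^{j}$ by the one generated by $T^{\vb{j}}$. Expanding $\Pi_{\vb{k}}$ as in \eqref{TraceWithinMomentumSector} and splitting $A=\expval*{A}\,I+\mathring{A}$ yields $\tr_{\vb{L}}(\Pi_{\vb{k}}A\Pi_{\vb{k}})=\expval*{A}\,\tr_{\vb{L}}\Pi_{\vb{k}}+\tfrac1V\sum_{\vb{j}\neq\vb{0}}e^{2\pi\I\sum_{s}k_sj_s/L_s}\,\tr_{\vb{L}}(T^{-\vb{j}}\mathring{A})$. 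Since $\tr_{\vb{L}}\Pi_{\vb{k}}=2^V/V+o(2^V/V)$ by Lemma~\ref{lemma_DimensionOfMomentumSectors_HighDimension}, and $\mathring{A}$ is again bounded and $\vb{q}$-local, the bound \eqref{eq_DifferenceBetweenSectors_HighDimension} will follow once I establish the multidimensional analogue of Lemma~\ref{lemma_BoundForTraceWithTranslation}, namely $\max_{\vb{j}\neq\vb{0}}\abs*{\tr_{\vb{L}}(T^{-\vb{j}}\mathring{A})}\lesssim 2^{V/2}$. The one-dimensional case $d=1$ is already contained in Lemma~\ref{lemma_DifferenceBetweenSectors} (regime $q\le L/2$), so I will assume $d\ge2$ from now on.

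To bound the trace I would evaluate $\tr_{\vb{L}}(T^{-\vb{j}}A)$ in the canonical product basis exactly as in the proof of Lemma~\ref{lemma_BoundForTraceWithTranslation}; bounding the surviving matrix element of $A_{\vb{q}}$ by $\norm*{A_{\vb{q}}}$ gives $\abs*{\tr_{\vb{L}}(T^{-\vb{j}}A)}\le\norm*{A_{\vb{q}}}\,2^{c_{\vb{j}}}$, where $c_{\vb{j}}$ is the number of independent spin summations surviving the product of Kronecker deltas, i.e.~the number of connected components of the graph $\Gamma_{\vb{j}}$ on $\mathbb{T}_{\vb{L}}$ with edges $\{\vb{x},\vb{x}+\vb{j}\}$ for $\vb{x}\in\mathbb{T}_{\vb{L}}\setminus\mathcal{R}_{\vb{q}}$. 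Writing $n_{\vb{j}}$ for the order of $\vb{j}$ in $\mathbb{T}_{\vb{L}}$, the graph carrying \emph{all} edges $\{\vb{x},\vb{x}+\vb{j}\}$ is a disjoint union of $V/n_{\vb{j}}$ cycles of length $n_{\vb{j}}$, one per coset of $\langle\vb{j}\rangle\le\mathbb{T}_{\vb{L}}$; removing the $\abs*{\mathcal{R}_{\vb{q}}}$ edges labelled by $\mathcal{R}_{\vb{q}}$, a cycle that loses $a$ of its edges splits into $\max\{1,a\}$ components, so $c_{\vb{j}}=\abs*{\mathcal{R}_{\vb{q}}}+Z_{\vb{j}}$, where $Z_{\vb{j}}$ is the number of cosets of $\langle\vb{j}\rangle$ disjoint from $\mathcal{R}_{\vb{q}}$ (when $n_{\vb{j}}=2$ the cycles degenerate to single edges and this count must be adjusted; that case is treated separately below).

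The crux is the estimate $c_{\vb{j}}\le V/2$ for every $\vb{j}\neq\vb{0}$. The key observation, which is exactly where $q_s\le L_s/2$ enters, is that no coset of $\langle\vb{j}\rangle$ can lie inside $\mathcal{R}_{\vb{q}}$: choosing a coordinate $r$ with $j_r\neq0$ and projecting, such a coset would map onto a coset of the nontrivial subgroup $\langle j_r\rangle\le\mathbb{Z}/L_r\mathbb{Z}$, whose points are equally spaced with gap $\gcd(j_r,L_r)\le L_r/2$ around the circle of length $L_r$, and therefore cannot be contained in any arc of length $q_r\le L_r/2$. Consequently every cycle retains at least one edge, i.e.~$a\le n_{\vb{j}}-1$ for each coset, so the number of cosets meeting $\mathcal{R}_{\vb{q}}$ is at least $\abs*{\mathcal{R}_{\vb{q}}}/(n_{\vb{j}}-1)$, which gives $c_{\vb{j}}\le\abs*{\mathcal{R}_{\vb{q}}}\tfrac{n_{\vb{j}}-2}{n_{\vb{j}}-1}+\tfrac{V}{n_{\vb{j}}}$. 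Since $\abs*{\mathcal{R}_{\vb{q}}}=\prod_s q_s\le\prod_s(L_s/2)=V/2^d\le V/4$ for $d\ge2$, and since $n\mapsto\tfrac{1}{2^d}\tfrac{n-2}{n-1}+\tfrac1n$ is decreasing on $[3,\infty)$ with value strictly below $1/2$ at $n=3$ whenever $d\ge2$, this yields $c_{\vb{j}}\le V/2$ for $n_{\vb{j}}\ge3$; for $n_{\vb{j}}=2$ the ``no coset inside $\mathcal{R}_{\vb{q}}$'' property forces each two-element coset to carry its Kronecker delta, so $c_{\vb{j}}=V/2$ directly. Feeding $\max_{\vb{j}\neq\vb{0}}\abs*{\tr_{\vb{L}}(T^{-\vb{j}}\mathring{A})}\lesssim 2^{V/2}$ back into the first step and dividing by $\tr_{\vb{L}}\Pi_{\vb{k}}\sim 2^V/V$ produces the claimed $\order{V/2^{V/2}}$.

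I expect this combinatorial estimate to be the main obstacle. Unlike the one-dimensional situation, where an interval meets the cosets of a cyclic subgroup as evenly as possible and one gets the clean identity $c_j=\max\{q,\gcd(j,L)\}$, in higher dimensions a box can be aligned with $\langle\vb{j}\rangle$ so that $c_{\vb{j}}$ strictly exceeds $\max\{\abs*{\mathcal{R}_{\vb{q}}},V/n_{\vb{j}}\}$, and the crude bound $c_{\vb{j}}\le V/n_{\vb{j}}+\abs*{\mathcal{R}_{\vb{q}}}$ is insufficient for small $n_{\vb{j}}$. The two facts that rescue the argument are the ``no coset fits inside the box'' observation (which crucially uses $q_s\le L_s/2$) and the hypothesis $d\ge2$ (which makes $\abs*{\mathcal{R}_{\vb{q}}}\le V/4$, leaving enough slack when $n_{\vb{j}}=3,4,\dots$); a little extra care is needed for $n_{\vb{j}}=2$, where the underlying cycles degenerate to single edges. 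The remaining ingredients — the product-basis evaluation of $\tr_{\vb{L}}(T^{-\vb{j}}A)$ and the summation over the $V-1$ nonzero momenta — are routine transcriptions of the corresponding computations in Section~\ref{sect_DifferenceBetweenSectors}.
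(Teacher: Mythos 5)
Your proof is correct, and it shares the paper's overall architecture (expand $\Pi_{\vb{k}}$, split off the traceless part, then bound $\tr_{\vb{L}}(T^{-\vb{j}}\mathring{A})$ by counting connected components of the graph on $\mathbb{T}_{\vb{L}}$ with edges $\{\vb{x},\vb{x}+\vb{j}\}$ for $\vb{x}\notin\mathcal{R}_{\vb{q}}$), but the crucial combinatorial estimate is done by a genuinely different argument. The paper's proof of the multidimensional analogue of Lemma~\ref{lemma_BoundForTraceWithTranslation} splits on the orbit length $g(\vb{j})$: for $g(\vb{j})\ge4$ it uses the crude bound $N(\vb{L},\vb{q},\vb{j})\le V/g(\vb{j})$ together with $\abs{\mathcal{R}_{\vb{q}}}\le V/2^d\le V/4$; for $g(\vb{j})\in\{2,3\}$ (primes) it picks a coordinate $t$ with $g(\vb{j})\mid L_t$, introduces the slab $\mathcal{A}_t=\{x_t\le L_t/g(\vb{j})\}$ as a coset transversal, and refines the loop count to $N\le V/g(\vb{j})-\abs{\mathcal{R}_{\vb{q}}\cap\mathcal{A}_t}$, which gives $c\le\abs{\mathcal{R}_{\vb{q}}\setminus\mathcal{A}_t}+V/g(\vb{j})\le V/2$. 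Your argument instead establishes the uniform observation that no coset of $\langle\vb{j}\rangle$ can fit inside the box $\mathcal{R}_{\vb{q}}$ (by projecting onto a coordinate $r$ with $j_r\neq0$ and using that the image is a full coset of $\langle j_r\rangle$, hence cannot lie in an arc of $q_r\le L_r/2$ sites), so every coset has $a\le n_{\vb{j}}-1$ intersection points with $\mathcal{R}_{\vb{q}}$; this yields $c\le\abs{\mathcal{R}_{\vb{q}}}\tfrac{n_{\vb{j}}-2}{n_{\vb{j}}-1}+V/n_{\vb{j}}$ for all $n_{\vb{j}}$, and a single monotonicity check gives $c\le V/2$ for $d\ge2$. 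Your route avoids both the case split and the primality observation, and is arguably cleaner; the paper's route avoids the projection lemma. Two small remarks: (i) the identity $c=\abs{\mathcal{R}_{\vb{q}}}+Z_{\vb{j}}$ actually holds for $n_{\vb{j}}=2$ as well if one treats a 2-cycle as having two (parallel) edges, so your separate treatment of that case, while correct, is not strictly necessary once the ``no coset inside the box'' fact ensures $a\le1$; (ii) the sentence justifying the projection step is slightly compressed --- the clean statement is that the minimal arc containing all $L_r/\gcd(j_r,L_r)\ge2$ evenly spaced points has $L_r-\gcd(j_r,L_r)+1>L_r/2\ge q_r$ sites.
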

	\begin{proof}
		Substituting $\Pi_{\vb{k}} \coloneqq \frac{1}{V} \sum_{ \vb{j} \in \mathbb{T}_{\vb{L}} } e^{ 2\pi i \sum_{s=1}^{d} \frac{k_{s} j_{s}}{L_{s}} } T^{-\vb{j}} $ for $ \vb{k} \in \mathbb{T}_{\vb{L}}$, we obtain
		\begin{align*}
			\tr_{ \vb{L} }\qty( \Pi_{ \vb{k} } A \Pi_{ \vb{k} } )
			&= \expval*{ A } \tr_{ \vb{L} } \Pi_{ \vb{k} } + \frac{1}{V} \sum_{ \vb{j} \in \mathbb{T}_{\vb{L}} \setminus \Bqty*{\vb{0}} } e^{ 2\pi i \sum_{s=1}^{d} \frac{k_{s} j_{s}}{L_{s}} } \tr_{ \vb{L} }( T^{-\vb{j}} \mathring{A} )\,.
			\label{TraceWithinMomentumSector_High}
		\end{align*}
		Then, the task is to evaluate the size of the quantity $\tr_{ \vb{L} }( T^{-\vb{j}} \mathring{A} )$.
		\begin{lemma}
			\label{lemma_BoundForTraceWithTranslation_HighDimension}
			Let $A \coloneqq A_{ \vb{q} } \otimes I_{ \mathbb{T}_{ \vb{L} } \setminus \mathcal{R}_{\vb{q}} }$ be a $\vb{q}$-local observable with $q_{s} \leq L_{s} / 2$ for all $s=1,\ldots,d$ and $\Vert A \Vert \lesssim 1$.
			Then, for any $\vb{j} \in \mathbb{T}_{ \vb{L} } \setminus \Bqty*{ \vb{0} }$, we have that
			\begin{equation}
				\abs{ \tr_{ \vb{L} }( T^{-\vb{j}} A ) }
				\lesssim 2^{V/2 }\,.
				\label{eq_BoundForTraceWithTranslation_HighDimension}
			\end{equation}
		\end{lemma}
		Combining \eqref{eq_BoundForTraceWithTranslation_HighDimension} with Lemma~\ref{lemma_DimensionOfMomentumSectors_HighDimension} gives the bound~\eqref{eq_DifferenceBetweenSectors_HighDimension}.
	\end{proof}

	Its remains to prove Lemma \ref{lemma_BoundForTraceWithTranslation_HighDimension}.

	\begin{proof}[Proof of Lemma \ref{lemma_BoundForTraceWithTranslation_HighDimension}]
		The $d=1$ case is proven in Lemma~\ref{lemma_BoundForTraceWithTranslation}.
		Thus, we assume $d\geq 2$ in the following.
		We choose an orthonormal basis of the Hilbert space on $\mathbb{T}_{ \vb{L} }$ as $\Bqty*{ \ket*{s} \mid s \colon \mathbb{T}_{ \vb{L} } \to \Bqty*{\up, \dn} }$ to calculate the trace.
		Then, similarly to \eqref{j_term}, we obtain
		\begin{align}
			\abs{ \tr_{ \vb{L} }( T^{-\vb{j}} A ) }
			\lesssim \sum_{s \colon \mathbb{T}_{ \vb{L} } \to \Bqty*{\up,\dn} } \prod_{ \vb{x} \in \mathbb{T}_{ \vb{L} } \setminus \mathcal{R}_{\vb{q}} } \delta_{ s(\vb{x}), s(\vb{x} + \vb{j}) }\,.
			\label{eq_TraceWithTranslatin_HighDimension1}
		\end{align}

		Next, analogously to \eqref{eq:Gdef}, we count the number of independent summations on the right-hand side of \eqref{eq_TraceWithTranslatin_HighDimension1}.
		To do so, we consider a graph $\mathcal{G}_{\vb{L},\vb{q},\vb{j}} = (V,E)$, whose vertices and edges are given by $V \coloneqq \mathbb{T}_{ \vb{L} }$ and $E \coloneqq \Bqty*{ (\vb{x}, \vb{x}+\vb{j}) \colon \vb{x} \in \mathbb{T}_{ \vb{L} }\setminus \mathcal{R}_{\vb{q}} }$, respectively.
		Exactly one redundant delta function appears in the product $\prod_{ \vb{x} \in \mathbb{T}_{ \vb{L} } \setminus \mathcal{R}_{\vb{q}} } \delta_{ s(\vb{x}), s(\vb{x} + \vb{j}) }$ for every occurrence of a loop in $\mathcal{G}_{\vb{L},\vb{q},\vb{j}}$.
		Thus, by denoting the number of loops in $\mathcal{G}_{\vb{L},\vb{q},\vb{j}}$ by $N(\vb{L}, \vb{q}, \vb{j})$, we obtain
		\begin{align}	\label{eq_TraceWithTranslatin_HighDimension2}
			\abs{ \tr_{ \vb{L} }( T^{-\vb{j}} \mathring{A} ) }
			\lesssim 2^{ \abs*{
					\mathcal{R}_{\vb{q}} } + N(\vb{L}, \vb{q}, \vb{j}) }\,.
		\end{align}

		As in the one-dimensional case, Lemma \ref{lemma_BoundForTraceWithTranslation}, the graph $\mathcal{G}_{\vb{L},\vb{q},\vb{j}}$ is obtained from $\mathcal{G}_{\vb{L},\vb{0},\vb{j}}$ by removing the edge $(\vb{x}, \vb{x}+\vb{j})$ for all $\vb{x} \in \mathcal{R}_{\vb{q}}$.
		Therefore, we have $N(\vb{L}, \vb{q}, \vb{j}) \leq N(\vb{L}, \vb{0}, \vb{j})$ for all~$\vb{q}$. Now, the number of loops in $\mathcal{G}_{\vb{L},\vb{0},\vb{j}}$ can be counted by considering the orbits of the cyclic group $\expval*{ T^{\vb{j}} }$ on $\mathbb{T}_{ \vb{L} }$.
		It is clear that the size of each orbit is equal to one another.
		Denoting it by $g(\vb{j})$, the number of loops in $\mathcal{G}_{\vb{L},\vb{0},\vb{j}}$ is given by
		\begin{equation*}
			N(\vb{L}, \vb{0}, \vb{j}) = \frac{ V }{ g(\vb{j}) }\,.
		\end{equation*}
		Note that, since $\vb{j} \neq \vb{0}$ by assumption, we have $g(\vb{j}) \geq 2$.

		If $g(\vb{j}) \geq 4$, the bound \eqref{eq_BoundForTraceWithTranslation_HighDimension} is already proven because
		\begin{align*}
			\abs{ \tr_{ \vb{L} }( T^{-\vb{j}} \mathring{A} ) }
			\lesssim 2^{ \abs*{
					\mathcal{R}_{\vb{q}} } + N(\vb{L}, \vb{q}, \vb{j}) }
			\leq 2^{ \frac{V}{4} + \frac{V}{4} } = 2^{ \frac{V}{2} }\,,
			\label{eq_TraceWithTranslatin_HighDimension3}
		\end{align*}
		where we used $\abs*{ \mathcal{R}_{\vb{q}} } \leq V/2^{d}$ and $d\geq 2$ by assumption.

		If $g(\vb{j}) = 2$ or $g(\vb{j}) = 3$, we must have $g(\vb{j}) j_{s} \equiv 0\, \pmod{L_{s}}$ for all $s$.
		Using again that $\vb{j} \neq \vb{0}$, there exists a non-zero component $j_{t}$.
		For such a coordinate direction $t \in \{1, ... , d\}$, we must have $g(\vb{j}) \mid L_{t} $ because $g(\vb{j})\in\{2,3\}$ is prime. We hence have a decomposition
		\begin{equation*}
			\mathbb{T}_{ \vb{L} }
			= \mathcal{A}_{t} \sqcup T^{\vb{j}} \mathcal{A}_{t} \sqcup T^{2 \vb{j}} \mathcal{A}_{t}\qc
			\mathcal{A}_{t} \coloneqq \Bqty{ \vb{x} \in \mathbb{T}_{ \vb{L} } \colon 1 \leq x_{t} \leq \frac{ L_{t} }{ g(\vb{j}) } }\,.
		\end{equation*}

		Every loop in $\mathcal{G}_{\vb{L},\vb{0},\vb{j}}$ can be considered to start from a site in $\mathcal{A}_{t}$.
		Therefore, removing the edge $(\vb{x}, \vb{x}+\vb{j})$ for all $\vb{x} \in \mathcal{R}_{\vb{q}}$ from $\mathcal{G}_{\vb{L},\vb{0},\vb{j}}$ decreases the number of loops at least by $\abs{ \mathcal{R}_{ \vb{q} } \cap \mathcal{A}_{t} }$, which implies
		\begin{equation*}
			N(\vb{L}, \vb{q}, \vb{j}) \leq N(\vb{L}, \vb{0}, \vb{j}) - \abs{ \mathcal{R}_{ \vb{q} } \cap \mathcal{A}_{t} }.
		\end{equation*}
		Thus, from \eqref{eq_TraceWithTranslatin_HighDimension2}, we obtain
		\begin{align*}
			\abs{ \tr_{ \vb{L} }( T^{-\vb{j}} \mathring{A} ) }
			\lesssim 2^{ \abs*{
					\mathcal{R}_{\vb{q}} } - \abs{ \mathcal{R}_{ \vb{q} } \cap \mathcal{A}_{t} } + N(\vb{L}, \vb{0}, \vb{j}) }
			= 2^{ \abs{ \mathcal{R}_{ \vb{q} } \setminus \mathcal{A}_{t} } + N(\vb{L}, \vb{0}, \vb{j}) }\,.
		\end{align*}
		Finally, we have
		\begin{align*}
			\abs{ \mathcal{R}_{ \vb{q} } \setminus \mathcal{A}_{t} } +
			N(\vb{L}, \vb{0}, \vb{j})
			\le \qty( \frac{ L_{t} }{2} - \frac{L_{t}}{ g(\vb{j}) } ) \prod_{s (\neq t) } \frac{ L_{s} }{2}
			+ \frac{ V }{ g(\vb{j}) } = \frac{ V }{ 2^{d} } \qty( 1- \frac{2}{ g(\vb{j}) } ) + \frac{ V }{ g(\vb{j}) } \leq \frac{V}{2}\,,
		\end{align*}
		which completes the proof of Lemma~\ref{lemma_BoundForTraceWithTranslation_HighDimension}.
	\end{proof}
\end{appendices}


\bibliography{refs}

\end{document}